\definecolor{Red}{rgb}{1,0,0}
\definecolor{Blue}{rgb}{0,0,1}
\definecolor{Olive}{rgb}{0.41,0.55,0.13}
\definecolor{Green}{rgb}{0,1,0}
\definecolor{MGreen}{rgb}{0,0.8,0}
\definecolor{DGreen}{rgb}{0,0.55,0}
\definecolor{Yellow}{rgb}{1,1,0}
\definecolor{Cyan}{rgb}{0,1,1}
\definecolor{Magenta}{rgb}{1,0,1}
\definecolor{Orange}{rgb}{1,.5,0}
\definecolor{Violet}{rgb}{.5,0,.5}
\definecolor{Purple}{rgb}{.75,0,.25}
\definecolor{Brown}{rgb}{.75,.5,.25}
\definecolor{Grey}{rgb}{.5,.5,.5}
\definecolor{Black}{rgb}{0,0,0}
\newenvironment{proofof}[1]{\noindent{\textbf{#1: }}}
{$\blacksquare$\vskip\belowdisplayskip}
\providecommand\abs[1]{\lvert#1\rvert}
\def\path{{\tt path}}
\newcommand{\ignore}[1]{}
\newcommand{\eps}{\varepsilon}
\newcommand{\bdm}{\begin{displaymath}}
\newcommand{\edm}{\end{displaymath}}
\newcommand{\bea}{\begin{eqnarray*}}
\newcommand{\eea}{\end{eqnarray*}}
\newcommand{\bean}{\begin{eqnarray}}
\newcommand{\eean}{\end{eqnarray}}
\newcommand{\B}{\{0, 1\}}
\newcommand{\sd}{d_{tv}}
\newtheorem{theorem}{Theorem}
\newtheorem{proposition}{Proposition}
\newtheorem{definition}{Definition}
\newtheorem{lemma}{Lemma}
\newtheorem{claim}{Claim}
\newtheorem{remark}{Remark}
\newenvironment{proof}{\noindent{\textbf{Proof: }}}
{\hfill $\blacksquare$\vskip\belowdisplayskip}
\def\np{{\rm NP}}
\def\szk{{\rm SZK}}
\def\pspace{{\rm PSPACE}}
\def\am{{\rm AM}}
\def\co{{\rm co}}
\providecommand\floor[1]{\lfloor#1\rfloor}
\providecommand\abs[1]{\lvert#1\rvert}
\newcommand{\bigabs}[1]{\bigl|#1\bigr|}
\newcommand{\tmax}{t_{\max}}
\title{The Computational Complexity of Estimating Convergence Time}
\author{Nayantara Bhatnagar \thanks{Department of Statistics, UC
    Berkeley, {\tt nayantara@huji.ac.il}, Supported by DOD ONR grant
  N0014-07-1-05-06 and DMS 0528488. Part of this work was done while
  visiting the Department Of Mathematics at the Weizmann Institute.}
\and
Andrej Bogdanov
\thanks{Department of Computer Science and Engineering, The
  Chinese University of Hong Kong, {\tt andrejb@cse.cuhk.edu.hk}, Supported by
  the National Basic Research Program of China Grant (973 Project No. 2007
  CB 807900) and a CUHK Direct Faculty Grant.}
\and
Elchanan Mossel
\thanks{Department of Statistics and Department of Computer
  Science , UC Berkeley, and Faculty of Mathematics and Computer Science, Weizmann
  Institute, {\tt mossel@stat.berkeley.edu},
  Supported by  DMS 0548249 (CAREER) award, by ISF grant 1300/08 and by EU grant
PIRG04-GA-2008-239317}}
\begin{document}
\maketitle

\begin{abstract}

An important problem in the implementation of Markov Chain Monte Carlo
algorithms is to determine the convergence time, or
the number of iterations before the chain is close to stationarity.
For many Markov chains used in practice this time is not known. Even
in cases where the convergence time is known
to be polynomial, the theoretical bounds are often too crude to be
practical. Thus, practitioners like to carry out some form of
statistical analysis in order to assess convergence.
This has led to the development of a number of methods known as convergence
diagnostics which attempt to diagnose whether the Markov chain is far
from stationarity.
  We study the problem of testing convergence in the
following settings and prove that the problem is hard in a
computational sense:
\begin{itemize}

\item Given a Markov chain that mixes rapidly, it is hard for
Statistical Zero Knowledge ($\szk$-hard) to distinguish whether starting
from a given state, the chain is close to stationarity by time $t$ or
far from stationarity at time $ct$ for a constant $c$. We show the
problem is in $\am$ intersect $\co\am$.

\item Given a Markov chain that mixes rapidly it is
  $\co\np$-hard to distinguish whether it is close to stationarity by
  time $t$ or far from stationarity at time $ct$ for a constant $c$. The
  problem is in $\co\am$.

\item It is $\pspace$-complete to distinguish whether the Markov chain
  is close to stationarity by time $t$ or far from being mixed at
  time $ct$ for $c \ge 1$.
\end{itemize}
\end{abstract}

\section{Introduction}
Markov Chain Monte Carlo (MCMC) simulations are an important tool for
sampling from
high dimensional distributions in Bayesian inference, computational
physics and biology and in applications such as image
processing. An important problem that arises in the implementation is
that if bounds on the convergence time are not known or impractical
for simulation then one would like a method
for determining if the chain is still far from converged.

A number of techniques are known to theoretically bound the rate of
convergence time as measured by the {\em mixing time} of a Markov
chain, see  e.g.~\cite{AlFi,Jer-book,LPW-book}. These
have been applied with to
problems such as volume estimation \cite{LV03}, Monte Carlo integration of
log-concave functions \cite{LV06}, approximate counting
of matchings \cite{JSV} and estimation of partition functions from
physics \cite{JS}.
However, in most practical applications of MCMC, there are no
effective bounds on the convergence time so for example it may not be
known if a chain on $2^{100}$ states mixes in time $1000$ or $2^{50}$.
Even in the cases where rapid mixing is known, the bounds are often
impractical since they are not tight especially since applications
usually require multiple independent samples. 

As a result, practitioners have focused on the development of a large
variety of statistical methods, called convergence diagnostics
which try to determine whether the Markov chain is far from
stationarity (see e.g. surveys by
\cite{IN-book,BR,CC,CL-book,GRS-book,RC-book}). A majority of
practitioners of the MCMC method run multiple diagnostics  
to test if the chains have converged. The two most popularly used public
domain diagnostic software packages are CODA and BOA \cite{PBCV, Boa}.
The idea behind
many of the methods is to
use the samples from the empirical distribution obtained when running
one or multiple copies of the chain, possibly from multiple
starting states to compute various
functionals and identify non-convergence.

While diagnostics are commonly used for MCMC, it has been repeatedly
observed that they cannot guarantee  
convergence, see e.g. ~\cite{CC,BR,AGT}. 


Here we formalize convergence to stationarity detection as an
algorithmic problem and study its complexity in terms of the size of
the description of the Markov chain, denoted by $n$. 
Our main contribution is showing that even in cases
where the mixing time of the chain is known to be bounded by $n^C$ for
some large $C$,  
the problem of distinguishing whether a Markov chain is close to or
far from stationarity at time $n^c$ for $c$ much smaller than $C$ is
``computationally hard". In other words under standard assumptions in
computational complexity the problem of distinguishing whether the
chain is close to or far from stationarity cannot
be solved in time $n^D$ for any constant $D$.  

The strength of our results is in their generality as they apply to
{\em all possible} diagnostics and in the weakness of the assumption -
in particular in assuming that the mixing time of the chain is not too long and
that the diagnostic is also given the initial state of the chain.  

From the point of view of theoretical computer science, our results
highlight the role of Statistical Zero Knowledge, $\am$, $\co\am$ and $\co\np$ in 
the computational study of MCMC.





\section{Results}
We begin by defining the mixing time which measures the rate of
convergence to the stationary distribution. Recall that the {\em variation
distance} (or statistical distance) between two probability
distributions $\mu$ and $\nu$ on state space $\Omega$ is given by
$d_{tv}(\mu,\nu) = \frac12 \sum_{\omega \in \Omega} \left|\mu(\omega) -
  \nu(\omega)\right|$.

\begin{definition}[Mixing time]
Let $M$ be a Markov chain with state space $\Omega$, transition matrix
  $P$ and a unique stationary distribution $\pi$.
The following measure of distance to stationarity will be convenient to define:
$$d(t) := \max_{x,y \in \Omega}d_{tv}(P^{t}(x,
\cdot),P^t(y,\cdot)).$$

The {\em $\varepsilon$-mixing
  time} is defined to be,
$$\tau(\varepsilon) := \min \{t: d(t) \leq \varepsilon \}.$$
We refer to $\tau(1/4)$ as the {\em mixing time}.
We also define the {\em $\varepsilon$-mixing time starting from $x$}:
$$\tau_x(\varepsilon) := \min \{t: \ d_{tv}(P^{t}(x,\cdot),\pi) \leq
\varepsilon \}.$$

We note that $\tau_x(\varepsilon) \leq \tau(\varepsilon)$ for all $x$.


\end{definition}

To formulate the problem, we think of the Markov chain as a ``rule"
for determining the next state of the chain  
given the current state and some randomness. 
\begin{definition}
We say that a circuit $C: \B^n \times \B^m \to \B^n$ {\em
  specifies} $P$ if for every pair of states $x, y \in \Omega$,
$\Pr_{r \sim \B^m}[C(x, r) = y] = P(x, y)$.
\end{definition}
In this formalization, $x$ is the ``current state", $r$ is the
``randomness", $y$ is the ``next state" and $C$ is the ``rule". 
Next we formalize the notion of ``Testing convergence''. 
We imagine the practitioner
has a time $t$ in mind that she would like to run the Markov chain algorithm
for. She would like to use the diagnostic to
determine whether at time $t$: 
\begin{itemize}
\item
The chain is (say) within $1/4$ variation distance of stationarity
\item
or at least at distance $1/4$ away from it.
\end{itemize}
Requiring the diagnostic to determine the total variation at time $t$ exactly is
not needed in many situations.  

Many practitioners will be happy with a diagnostic which will 
\begin{itemize}
\item
Declare the chain has mixed if it is within $1/8$ variation distance
of stationarity at time $t$.
\item
Declare it did not mix if it is at least at distance $1/2$ away from
it at time $t$.  
\end{itemize}

An even weaker requirement for the diagnostic is to:
\begin{itemize}
\item
Declare the chain has mixed if it is within $1/8$ variation distance
of stationarity at time $t$. 
\item
Declare it did not mix if it is at least at distance $1/2$ away from
it at time $c t$, where $c \ge 1$.  
\end{itemize}
Thus in the last formulation, the practitioner is satisfied with an
approximate output of the 
diagnostics both in terms the time and  
in terms of the total variation distance. This is the problem we will
study. In fact, we will make the requirement from the diagnostic even easier  
by providing it with a (correct) bound on the actual mixing time of
the chain. This bound will be denoted by $t_{\max}$. 

In realistic settings it is natural to measure the running time of the
diagnostics in relation to the running time  
of the chain itself as well as to the size of the chain. In particular
it is natural to consider diagnostics that would run for 
time that is polynomial in $t$ and $t_{\max}$. The standard way to
formalize such a requirement is to insist that the inputs $t,t_{\max}$
to the  
diagnostic algorithm to be given in unary form (note that if
$t,t_{\max}$ were specified as 
binary numbers, an efficient algorithm would be required to run in time
{\em poly-logarithmic} in these parameters, a much stronger
requirement). We continue with description of the different
diagnostic problems and the statement of the hardness results.


\subsection{Given Starting Point} 
The discussion above motivates the definition of
the first problem below. Assume that we had a diagnostic algorithm. As
input, it would take the the tuple $(C,x,1^t,1^{t_{\max}})$, i.e., a
description of the circuit which describes the moves of the Markov
chain, an initial starting state for the chain, and the times $t$ and
$t_{\max}$, which are specified as unary numbers. 
The following theorems show that a diagnostic algorithm
as described above is unlikely to exist under standard
complexity-theoretic assumptions.
We consider two versions of the convergence testing problem, one where the
starting state of the Markov chain is specified ({\sc GapPoly
  TestConvergenceWithStart$_c$}) and the other where it is arbitrary ({\sc
  GapPolyTestConvergence$_c$}):\\

{\bf Problem:} {\sc GapPolyTestConvergenceWithStart$_{c,\delta}$}
({\sc GPTCS$_{c,\delta}$}).\\
{\bf Input:} $(C,x,1^t,1^{\tmax})$, where $C$ is a circuit specifying a Markov
chain $P$ on state space $\Omega \subseteq \{0,1\}^n$, $x \in \Omega$  and
$t,t_{\max} \in \mathbb N$.\\
{\bf Promise:} The Markov chain $P$ is ergodic and $\tau(1/4) \leq \tmax$.\\
{\bf YES instances:} $\tau_x(1/4-\delta) < t.$\\
{\bf NO instances:} $\tau_x(1/4+\delta) > ct.$\\

Informally the input to this problem is the MC rule $C$, a starting
state $x$, and times $t,t_{\max}$.  
It is promised that the chain mixes by time $\tmax$. The expectation
from the diagnostic is to:  
\begin{itemize}
\item
Declare the chain has mixed if it is within $1/4-\delta$ variation
distance of stationarity at time $t$. 
\item
Declare it did not mix if it is at least at distance $1/4+\delta$ away
from it at time $c t$, where $c > 1$. 
\end{itemize} 
Note again that the diagnostic is given room for error both in terms
of the total variation distance and in terms  
of the time. 

The following theorem refers to the complexity class $\szk$, which is
the class of all promise problems that have statistical zero-knowledge
proofs with completeness $2/3$ and soundness $1/3$. It is believed
that these problems cannot be solved in polynomial time.  
 
\begin{theorem}\label{thm:szk-complete}Let $c \ge 1$.
\begin{itemize}
\item For $0 < \delta \le 1/4$, {\sc
    GPTCS$_{c,\delta}$} is in $\am \cap \co\am$.
\item For $\frac{\sqrt{3} - 1.5}{2}  = .116025..< \delta \le 1/4$, {\sc
    GPTCS$_{c,\delta}$} is in $\szk$.
\item Let $0 \leq \delta < 1/4$. For $$c<\frac{\tmax}{4t}
\ln \left(\frac{2}{1+4\delta}\right),$$ {\sc GPTCS$_{c,\delta}$} is $\szk$-hard.
\end{itemize}
\end{theorem}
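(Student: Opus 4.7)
For the membership results (the first two bullets), I would reduce to the Statistical Distance problem $\mathrm{SD}(\alpha,\beta)$, which by Sahai--Vadhan lies in $\am\cap\co\am$ whenever $\alpha>\beta$ and in $\szk$ when $\alpha^{2}>\beta$. Given an instance $(C,x,1^{t},1^{\tmax})$ of {\sc GPTCS$_{c,\delta}$}, let $A$ be the distribution $P^{t}(x,\cdot)$ obtained by running $C$ for $t$ steps starting from $x$, and let $B$ be the distribution $P^{k\tmax}(x,\cdot)$ for a large constant $k$. By submultiplicativity of $d(\cdot)$ together with the promise $\tau(1/4)\le\tmax$, $d_{tv}(B,\pi)\le(1/4)^{k}$, which can be made arbitrarily small. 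Combined with $d_{tv}(A,\pi)\le 1/4-\delta$ in the yes case and, by monotonicity of $s\mapsto d_{tv}(P^{s}(x,\cdot),\pi)$, $d_{tv}(A,\pi)>1/4+\delta$ in the no case, this gives $\sd(A,B)\le 1/4-\delta+o(1)$ or $\sd(A,B)\ge 1/4+\delta-o(1)$ respectively. The reduction places {\sc GPTCS$_{c,\delta}$} in $\mathrm{SD}(1/4+\delta,1/4-\delta)$, which yields $\am\cap\co\am$ for every $\delta>0$; the $\szk$ condition $\alpha^{2}>\beta$ becomes $(1/4+\delta)^{2}>1/4-\delta$, which rearranges to $\delta>(\sqrt{3}-1.5)/2$, exactly matching the stated threshold.

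For the $\szk$-hardness, I would reduce $\mathrm{SD}$ to {\sc GPTCS$_{c,\delta}$}. Given circuits $C_{0},C_{1}$ sampling $D_{0},D_{1}$, first apply Sahai--Vadhan polarization to amplify the promise to $\sd(D_{0},D_{1})\ge 1-2^{-k}$ versus $\le 2^{-k}$ for a large polynomial $k$. Then construct a Markov chain $M$ on state space $\{x_{0}\}\cup\{0,1\}\times\{0,1\}^{n}$ whose dynamics encode the $\mathrm{SD}$ instance: at each step, with probability $1-\lambda$ the chain stays put, and with probability $\lambda$ it either resamples the second coordinate from $D_{b}$ (holding the first coordinate $b$ fixed) or flips $b$; the start state $x_{0}$ is transient and feeds into the two sectors via a fresh sample. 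The laziness parameter $\lambda$ is chosen so that the worst-case mixing time of $M$ equals $\tmax$, enforcing the promise $\tau(1/4)\le\tmax$.

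The distance $d_{tv}(P^{s}(x_{0},\cdot),\pi)$ is then computed explicitly. In the no-SD case, $D_{0}\approx D_{1}$, the two sectors collapse, and the decay from $x_{0}$ is essentially $(1-\lambda)^{s}\sim 2^{-s/\tmax}$, yielding $d_{tv}(P^{t}(x_{0},\cdot),\pi)\le 1/4-\delta$. In the yes-SD case, $D_{0}$ and $D_{1}$ are essentially disjoint and sector flips become a bottleneck; from $x_{0}$ this gives $d_{tv}(P^{ct}(x_{0},\cdot),\pi)>1/4+\delta$ as long as $ct<\tmax\cdot\tfrac{1}{4}\ln\tfrac{2}{1+4\delta}$. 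Rearranging produces the stated bound on $c$. The main obstacle is the joint calibration of the laziness $\lambda$ and the inter-sector flip rate: these must be tuned so that the chain's worst-case mixing time remains bounded by $\tmax$ uniformly over polarized $\mathrm{SD}$ instances, while at the same time the starting-state profile from $x_{0}$ tightly tracks the $\sd(D_{0},D_{1})$ gap to straddle the thresholds $1/4\pm\delta$ exactly at times $t$ and $ct$.
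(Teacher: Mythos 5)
Your second bullet follows the paper's route: you compare $P^t(x,\cdot)$ with a nearly stationary run of the chain (the paper uses $P^{\tau(1/k)}(x,\cdot)$, you use $P^{k\tmax}(x,\cdot)$, same effect), use monotonicity of $d_{tv}(P^s(x,\cdot),\pi)$ together with $c\ge 1$ on the NO side, and the polarization condition $(1/4+\delta)^2>1/4-\delta$ gives exactly the $(\sqrt3-1.5)/2$ threshold. The first bullet, however, has a genuine gap: you attribute to Sahai--Vadhan the claim that $SD_{\mathbf c,\mathbf s}\in\am\cap\co\am$ whenever $\mathbf c>\mathbf s$. Sahai--Vadhan give membership (in $\szk$, hence in $\am\cap\co\am$) only under $\mathbf c^2>\mathbf s$, which is precisely what fails in the regime $\delta\le(\sqrt3-1.5)/2$ that the first bullet is meant to cover. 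The unrestricted-gap statement is not an off-the-shelf fact; it is Theorem~\ref{thm:am-coam} of this paper, and its proof requires a genuinely new $\co\am$ protocol (a Goldwasser--Sipser-style lower-bound protocol certifying the counts $N(t)=\bigabs{\{\omega:\abs{C^{-1}(\omega)}\ge t \text{ and } \abs{C'^{-1}(\omega)}\ge t\}}$, which are then used to lower-bound $1-\sd(p,p')$ via identity~(\ref{eqn:sdother})); the $\am$ direction is just the distinguishing protocol, the $\co\am$ direction is the substance. Without supplying that protocol, your proof of the first bullet assumes exactly the nontrivial ingredient.

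For the hardness bullet your plan is in the paper's spirit (turn an SD instance into a slowly refreshing chain whose distance from the start tracks $d_{tv}(D_0,D_1)$), but as described it would not go through. First, in the no-SD case the bound $d_{tv}(P^t(x_0,\cdot),\pi)\le 1/4-\delta$ cannot come from the decay $(1-\lambda)^t\approx 2^{-t/\tmax}$: $t$ may be far smaller than $\tmax$, so that factor is near $1$; what must make the distance small is a prefactor proportional to $d_{tv}(D_0,D_1)$, i.e.\ the chain must already be within about $\tfrac12 d_{tv}(D_0,D_1)$ of $\pi$ one step after leaving $x_0$. Second, for the yes-SD lower bound the start must feed into one sector only: if $x_0$ feeds into the two sectors symmetrically via a fresh sample, the sector marginal is stationary after one step and the distance collapses even when $D_0,D_1$ are disjoint. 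Third, a sector bit that flips while retaining $y$ makes the stationary law an awkward mixture, and the calibration you yourself flag as the main obstacle (ensuring $\tau(1/4)\le\tmax$ in both cases while $d_{tv}(P^{ct}(x_0,\cdot),\pi)>1/4+\delta$) is the actual content of the proof, not a detail to be deferred. The paper resolves all of this with a cleaner chain: state space $[m]\times\{0,1\}^n$, the clock coordinate refreshed uniformly at every step, and $Y$ resampled from $\mu_1$ (resp.\ $\mu_2$) exactly when the clock equals $1$ (resp.\ $2$), started at $(1,0^n)$. Then $\pi=U_{[m]}\times\tfrac12(\mu_1+\mu_2)$ exactly and $d_{tv}(P^t(x,\cdot),\pi)=\tfrac12\bigl(\tfrac{m-2}{m}\bigr)^{t-1}d_{tv}(\mu_1,\mu_2)$, from which the YES bound for every $t\ge1$, the NO bound whenever $ct<\tfrac{m}{4}\ln\bigl(\tfrac{2}{1+4\delta}\bigr)$, and $\tau(1/4)\le m=\tmax$ in both cases all follow by inspection, yielding the stated constant without any polarization (the reduction starts directly from {\sc SD}$_{\mathbf c,\mathbf s}$ with $\mathbf c=1$, $\mathbf s=1/4-\delta$, which is $\szk$-hard). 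You would need either this chain or an equally explicit computation for yours to close the argument.
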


The most interesting part of the theorem is the last part which
informally says that the problem  
{\sc GPTCS$_{c,\delta}$} is $\szk$-hard. In other words, solving it in
polynomial time will result in solving all the problems 
in $\szk$ in polynomial time. The second part of the theorem states
that for some values of $\delta$ this is the ``exact" level of
hardness.  
The first part of the theorem states that without restrictions on
$\delta$ the problem 
belongs to the class $\am \cap \co\am$ (which contains the
class $\szk$). The classes $\am$ and $\co\am$ respectively contain the
classes $\np$ and $\co\np$ and it is believed that they are equal to
them, but this is as yet unproven. 


The restriction on the constant $\delta$ in the second part of the result comes
from the fact that the proof
is by reduction to the $\szk$-complete problem {\sc Statistical
  Distance} ({\sc SD}, see Section \ref{sec:SD} for precise
definitions). Holenstein and Renner give evidence in \cite{HR} that
{\sc SD} is in $\szk$ only when there
is a lower bound on
the gap between the completeness and soundness. We show that the
restriction in Theorem 
\ref{thm:szk-complete}
necessary since otherwise it would be possible to put {\sc SD} in
$\szk$ for a smaller value of the completeness-soundness gap.

On the other hand, we can show a slightly weaker result and put  {\sc
    GPTCS$_{c,\delta}$} into $\am \cap \co\am$ without any restrictions on
$\delta$. To show this, we first prove that {\sc SD} is in $\am \cap
\co\am$ when no restriction is put on the
gap between the completeness and soundness. This result may be
interesting in its own right as it involves showing protocols for {\sc
  Statistical Distance} that are new, to our knowledge.

\subsection{Arbitrary Starting Point}
So far we have discussed mixing from a given starting point. 
A desired property of a Markov chain is fast mixing from an arbitrary
starting point. Intuitively, this problem is harder than the previous one 
since it involves all starting points. This is consistent with our result below 
where we obtain a stronger
hardness. \\

{\bf Problem:} {\sc GapPolyTestConvergence$_{c,\delta}$} ({\sc
  GPTC$_{c,\delta}$}).\\
{\bf Input:} $(C,x,1^t,1^{\tmax})$, where $C$ is a circuit specifying a Markov
chain $P$ on state space $\Omega \subseteq \{0,1\}^n$, $x \in \Omega$  and
$t,\tmax \in \mathbb N$.\\
{\bf Promise:} The Markov chain $P$ is ergodic and
$\tau(1/4) \leq \tmax$.\\
{\bf YES instances:} $\tau(1/4-\delta) < t$.\\
{\bf NO instances:} $\tau(1/4+\delta) > ct$.

Note that the only difference between this and the previous problem is
that the total variation  
distance is measured from the worst starting point instead of from a
given starting point.

\begin{theorem}\label{thm:conp-hard}
Let $c \geq 1$.
\begin{itemize}
\item For $0<\delta \le 1/4$, {\sc GPTC$_{c,\delta}$} $\in \co\am$.
\item Let $0 \leq \delta <1/4$. For 
\[
c< \frac{3/4-\delta}{2} \sqrt{\tmax/t^2 n^3}
\]
 it is $\co\np$-hard to decide {\sc
    GPTC$_{c,\delta}$}.
\end{itemize}
\end{theorem}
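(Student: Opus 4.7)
To place {\sc GPTC$_{c,\delta}$} in $\co\am$ I give an $\am$ protocol for its complement. A NO instance is precisely one in which some starting state $x$ witnesses slow mixing: $\sd(P^{ct}(x,\cdot),\pi)>1/4+\delta$. The protocol has the prover first send such a candidate $x$; the verifier and prover then run the $\am$ protocol for the complement of {\sc GPTCS$_{c,\delta}$} on input $(C,x,1^t,1^{\tmax})$, which is guaranteed to exist by the first clause of Theorem \ref{thm:szk-complete}. Since the existential quantifier over $x$ can be absorbed into the initial prover message, the combined interaction is still an $\am$ protocol, so {\sc GPTC$_{c,\delta}$} lies in $\co\am$.

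\textbf{Hardness.} I reduce from {\sc UnSat}. Given a Boolean formula $\phi$ on $n$ variables I construct, in polynomial time, a circuit $C_\phi$ describing a Markov chain $M_\phi$ on a state space $\Omega\subseteq\{0,1\}^{n'}$, together with a starting state $x_0\in\Omega$ and parameters $t,\tmax$, so that $\phi$ is unsatisfiable iff $(C_\phi,x_0,1^t,1^{\tmax})$ is a YES instance. The chain couples a ``main'' component (a lazy random walk on the hypercube, mixing in $O(n\log n)$) with a ``trap'' component (a random walk on a path of length $L=\Theta(\sqrt{\tmax/n^3})$), joined through a Metropolis-style filter gated by $\phi$: the transitions linking the trap to the main piece are active only at states whose hypercube coordinate is a satisfying assignment of $\phi$, with Metropolis weights arranged so that the chain is reversible and ergodic regardless of $\phi$. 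If $\phi$ is unsatisfiable the trap is essentially inaccessible and the chain mixes from every state in $\poly(n)<t$ steps, whereas if $\phi$ is satisfiable then starting from the far endpoint $x_0$ of the trap the chain requires $\Omega(L)$ path steps before approaching stationarity, giving $\tau_{x_0}(1/4+\delta)>ct$ for every $c$ below the stated bound. A standard conductance or canonical-path argument shows that the promise $\tau(1/4)\le\tmax$ holds in both cases.

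\textbf{Main obstacle.} The delicate issue is matching the gap constants exactly: I need $\sd(P^{ct}(x_0,\cdot),\pi)>1/4+\delta$ at time $ct$ in the satisfiable case (not merely a generic constant), while simultaneously $\sd(P^t(x,\cdot),\pi)<1/4-\delta$ from every starting state in the unsatisfiable case. This requires sharp lower bounds on the distance to stationarity for a random walk on an interval started at an endpoint, together with tight upper bounds on the mixing time of the coupled chain; the $n^3$ factor in the hardness bound on $c$ reflects the overhead of driving the trap's dynamics at the (slower) timescale of the hypercube component.
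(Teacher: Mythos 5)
Your protocol for the complement does not match the paper's definitions. In {\sc GPTC$_{c,\delta}$} the quantity $\tau$ is defined through $d(t)=\max_{x,y}\sd(P^t(x,\cdot),P^t(y,\cdot))$, the worst \emph{pair} of starting states, not through distance to $\pi$. A NO instance therefore only guarantees a pair $x,y$ with $\sd(P^{ct}(x,\cdot),P^{ct}(y,\cdot))>1/4+\delta$; by the triangle inequality this yields some start $x$ with $\sd(P^{ct}(x,\cdot),\pi)\geq (1/4+\delta)/2$, which can be well below $1/4+\delta$. Such an $x$ need not be a NO instance of {\sc GPTCS$_{c,\delta}$} --- it may fall in the promise gap, where the $\co\am$ protocol of Theorem \ref{thm:szk-complete} guarantees nothing --- so completeness of your combined protocol fails. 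Retuning the parameter does not help: to make every start a YES instance of {\sc GPTCS$_{c,\delta'}$} in the YES case you need $\delta'\leq\delta$, while to make the witness a NO instance you need $1/8+\delta/2>1/4+\delta'$, which forces $\delta>1/4$. The paper sidesteps all of this: the prover sends the maximizing pair $x,y$, and the two parties run the elementary $\am$ protocol for {\sc SD$_{1/4+\delta,1/4-\delta}$} (Claim \ref{cl:AM-SD}) on the circuits sampling $P^t(x,\cdot)$ and $P^t(y,\cdot)$; both distributions are samplable, so no proxy for $\pi$ is needed and the thresholds match exactly.

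\textbf{On the hardness reduction.} The trap-path idea conflicts with the YES side of {\sc GPTC}, whose condition is worst case over \emph{all} starting states, including trap states. In the only nontrivial regime the hypothesis $c\geq 1$ forces $\tmax \gtrsim t^2n^3$, so your path has length $L=\Theta(\sqrt{\tmax/n^3})\gtrsim t$; a state at graph distance more than $t$ from the bulk of the stationary mass cannot be within $1/4-\delta$ of $\pi$ after $t$ steps no matter how the weights are chosen, so in the unsatisfiable case you do not get $\tau(1/4-\delta)<t$ and the map does not produce YES instances. The alternative --- gating the trap's links by $\phi$ so the trap effectively disappears when $\phi$ is unsatisfiable --- destroys ergodicity (violating the promise), and patching with weak always-on links reintroduces slow mixing from inside the trap; asserting the chain is ``reversible and ergodic regardless of $\phi$'' does not resolve this tension, which is exactly where the construction must do work. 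The paper's reduction adds no states at all: the chain is the weighted walk on the hypercube with weight-$1$ edges, a self-loop of weight $n$ at unsatisfying assignments and $n^d$ at satisfying ones. If $\psi$ is unsatisfiable the chain is exactly the lazy hypercube walk, mixing in $t=C_\delta n\log n$ from every start; if $\psi$ is satisfiable, a satisfying state $y$ is sticky, and comparing the starts $y$ and $\bar y$ gives $d(t)\geq 1-2t/(n^{d-1}+1)$, hence $\tau(1/4+\delta)>\tfrac12 n^{d-1}(3/4-\delta)$, while a conductance argument gives $\tau(1/4)\leq 32n^{2d+1}=\tmax$ in both cases. Note also that since the NO condition is stated via the pairwise $d(t)$, the paper never needs the sharp lower bound on distance to $\pi$ you flag as the main obstacle: exhibiting two starts whose $t$-step distributions are nearly disjoint suffices.
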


Again the second part of the theorem is the more interesting part. It
shows that the diagnostic problem is $\co\np$ hard so it is very
unlikely to be solved in polynomial time. This hardness is stronger
than $\szk$-hardness because $\szk$ is unlikely to contain $\co\np$-hard
problems. If it did, this would imply that $\np=\co\np$ since $\szk
\subseteq \am$ and it is believed that $\am=\np$. The
first part of the theorem shows that the problem is always in $\co\am$.

\subsection{Arbitrary mixing times}
Finally we remove the restriction that the running time of the
algorithm should be polynomial in the times $t,t_{\max}$.  
This corresponds to situations where the mixing time of the chain may be
exponentially large in the size of the rule defining the
chain. 
This rules out many situations of practical interest. However it is
relevant in scenarios where analysis of the mixing time  
is of {\em theoretical} interest. For example there is an extensive
research in theoretical physics on the rate of convergence of Gibbs
samplers on spin glasses even in cases where the convergence rate is
very slow (see~\cite{Der:87} and follow up work). 
In such setups it is natural to define the problem as follows:\\


{\bf Problem} {\sc GapTestConvergence$_{c,\delta}$} ({\sc GTC$_{c,\delta}$}).\\
{\bf Input:} $(C,x,t)$, where $C$ is a circuit specifying a Markov
chain $P$ on state space $\Omega \subseteq \{0,1\}^n$, $x \in \Omega$  and
$t \in \mathbb N$.\\ 
{\bf Promise:} The Markov chain $P$ is ergodic.\\
{\bf YES instances:} $\tau(1/4 - \delta) < t$.\\
{\bf NO instances:} $\tau(1/4+\delta) > ct$.\\

Note that the main difference is that in this problem the time $t$ is given in
binary representation. Thus, informally in this case   
the efficiency is measured with respect to the logarithm of
$t$. Additionally note that the mixing time of the chain itself does
not put any restrictions on the diagnostic. We then prove the
following result:

\begin{theorem}\label{thm:pspace-complete}
Let $1 \le c \le \exp(n^{O(1)})$.
\begin{itemize}
\item For $\exp(-n^{O(1)})< \delta \leq 1/4$
it is in $\pspace$ to decide {\sc GTC$_{c,\delta}$}.
\item Let $ 0 \leq \delta <
1/4$, then, it is  $\pspace$-hard to decide {\sc GTC$_{c,\delta}$}.
\end{itemize}
\end{theorem}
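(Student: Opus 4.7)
I plan to prove $\pspace$-membership by directly approximating the relevant variation distances in polynomial space, and $\pspace$-hardness by reducing from the acceptance problem for a polynomial-space deterministic Turing machine.

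\emph{Upper bound.} Although $t$ given in binary may be as large as $\exp(n^{O(1)})$, any entry of $P^s$ for $s \in \{t, ct\}$ can be approximated to $\exp(-n^{O(1)})$ precision in $\pspace$ using repeated squaring. Each transition probability $P(x,y) = 2^{-m}|\{r: C(x,r)=y\}|$ is computed exactly in $O(m)$ space by iterating over $r$. Entries of $P^{2^i}$ are obtained recursively via
\begin{equation*}
P^{2^i}(x,y) \;=\; \sum_{z\in\Omega} P^{2^{i-1}}(x,z)\cdot P^{2^{i-1}}(z,y),
\end{equation*}
enumerating $z$ one at a time. The recursion depth is $O(\log t) = n^{O(1)}$, each level uses polynomial space, and working with $\ell = n^{O(1)}$ bits of precision keeps the accumulated squaring error bounded by $O(\log t)\cdot 2^n\cdot 2^{-\ell} < \delta/2$ via standard error analysis for stochastic matrices. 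Given these approximations, $d(s) = \max_{x,y} d_{tv}(P^s(x,\cdot),P^s(y,\cdot))$ is computed by iterating over pairs $(x,y)$ and over $\omega$, then compared with $1/4\pm\delta$, all in $\pspace$.

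\emph{Hardness.} I reduce from the $\pspace$-complete problem: given a deterministic TM $M$ using space $q(|w|)$ on input $w$, does $M$ accept $w$ (within $T = 2^{q(|w|)}$ steps)? The Markov chain's state is either a triple $(b,\sigma,k)$ with a ``copy'' bit $b\in\{0,1\}$, configuration $\sigma\in\{0,1\}^{q(|w|)}$, and counter $k\in\{0,\ldots,T\}$, or a distinguished sink state $p$. The circuit $C$ implements:
\begin{itemize}
\item $(b,\sigma,k)\mapsto p$ whenever $\sigma=\sigma_{\mathrm{acc}}$;
\item $(b,\sigma,k)\mapsto(b,M(\sigma),k+1)$ whenever $k<T$ and $\sigma\ne\sigma_{\mathrm{acc}}$;
\item $(b,\sigma,T)\mapsto(b,\sigma_0,0)$ whenever $\sigma\ne\sigma_{\mathrm{acc}}$;
\item $p\mapsto p$.
\end{itemize}
Overlay this with global noise that, with probability $\varepsilon = \exp(-n^C)$ for $C$ chosen large, jumps to a uniformly random state, ensuring ergodicity. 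Set $t:=3T$. In the YES case every starting state reaches $p$ within $2T+1$ noise-free steps (at worst via one reset), so $P^t(x,\cdot)$ is $(1-O(\varepsilon t))$-concentrated at $p$ for every $x$, and $d(t) = O(\varepsilon t)<\delta$, giving $\tau(1/4-\delta)<t$. In the NO case the two noise-free trajectories from $(0,\sigma_0,0)$ and $(1,\sigma_0,0)$ cycle deterministically in disjoint copies of $\{(b,\sigma,k)\}$ and never touch $p$; at time $ct\ll 1/\varepsilon$ each distribution is a near point-mass on a distinct state, so $d(ct)\approx 1>1/4+\delta$ and $\tau(1/4+\delta)>ct$. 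Choosing $C$ large enough accommodates any $c\le\exp(n^{O(1)})$.

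\emph{Main obstacle.} The delicate issue is that $\tau$ is the \emph{worst-case} mixing time over starting states: the YES case requires \emph{every} start to mix, while the NO case only needs \emph{one} that does not. The counter-and-reset gadget funnels every start to $p$ in the YES case, and the doubled copy bit $b$ guarantees that $(0,\sigma_0,0)$ and $(1,\sigma_0,0)$ remain distinguishable witnesses in the NO case. Finalizing the precision analysis for the $\pspace$ algorithm, and choosing $\varepsilon$ to fit the regimes $c\le\exp(n^{O(1)})$ and $\delta\ge\exp(-n^{O(1)})$, are the remaining routine checks.
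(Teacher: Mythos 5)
Your proposal is correct in substance, but it takes a genuinely different route from the paper in both halves. For membership, the paper does not deterministically power the matrix: it gives a randomized sampling algorithm (run the chain from each state for $t$ steps many times, estimate $d(t)$ empirically with Chernoff bounds) that lives in the class $BP_H\pspace$ of poly-space machines with exponentially many coin flips, and then invokes $BP_H\pspace = \pspace$ via Savitch's theorem; your repeated-squaring scheme is more direct and purely deterministic, and it works, though your stated error bound $O(\log t)\cdot 2^n\cdot 2^{-\ell}$ is too optimistic --- in the row-sum norm the error roughly triples per squaring level (from $P^{2^i}E + EP^{2^i} + E^2$ plus truncation), giving something like $\exp\bigl(O(\log t)\bigr)\cdot 2^n\cdot 2^{-\ell}$; since $\log t = n^{O(1)}$ and $\delta \geq \exp(-n^{O(1)})$, a larger polynomial choice of $\ell$ still suffices, so the conclusion stands. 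For hardness, the paper also reduces from an arbitrary $\pspace$ language on the configuration graph, but via a \emph{reversible} weighted random walk: machine-step edges, self-loops, and an edge from $s_{rej}$ to $s_{start}$ all get a huge weight $w$, while $s_{acc}$ and $s_{rej}$ are joined by a single weight-$1$ edge; in the YES case the heavy-edge graph is connected and a conductance bound (Theorem \ref{thm:mixing-conductance}) gives mixing time $O(D^3 2^{3n}/(1-4\delta))$, while in the NO case the heavy-edge graph splits into two components bridged only by the light edge, forcing $d(t) \geq 1 - 2t/w$ and mixing time $\geq w/4$. Your construction instead uses a deterministic simulation with a timeout counter, an absorbing sink, a duplicated copy bit, and $\exp(-n^{O(1)})$ uniform-restart noise for ergodicity. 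Both reductions are sound; yours avoids the conductance machinery and handles the worst-case start cleanly with the counter-and-reset funnel, while the paper's buys the extra conclusion noted in its remark that hardness persists even when the Markov chain is restricted to be reversible, which your non-reversible noise-overlay chain does not give. The remaining gaps you flag (precision bookkeeping, realizing the noise probability and the near-uniform restart with polynomially many random bits, defining the machine step as the identity on halted configurations, and restricting the maximum to states reachable from the given start) are indeed routine.
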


It is known that $\pspace$ hard problems are at least as hard as all
the problem in polynomial time $\co\np$, $\np$ and all other problems  
in the polynomial hierarchy.  


\section{Protocols for statistical distance}\label{sec:SD}

Given a circuit $C\colon \B^n \to \B^n$, the probability distribution
$p_C$ associated to $C$ assigns probability $p(\omega) =
\abs{C^{-1}(\omega)}/2^n$ to every $\omega \in \B^n$. We will be
interested in estimating the statistical distance between the
distributions associated to a pair of circuits $C, C'\colon \B^n \to
\B^n$. Denote those distributions by $p$ and $p'$,
respectively.

For a pair of constants $0 \leq \mathbf s < \mathbf c \leq 1$, {\sc
  SD}$_{\mathbf c,\mathbf s}$
is defined to be the following promise problem. The inputs are pairs
of circuits $C,
C'\colon \B^n \to \B^n$, the YES instances satisfy $\sd(p, p') \geq
\mathbf c$, and the NO instances satisfy $\sd(p, p') < \mathbf s$.

Sahai and Vadhan~\cite{SaVa} show that for every pair of constants
$\mathbf c,\mathbf s$ the problem
{\sc SD$_{\mathbf c,\mathbf s}$} is $\szk$-hard. They also show that
when $\mathbf c^2 > \mathbf s$, $SD_{\mathbf c, \mathbf s}$ is in
$\szk$. Our theorem yields a weaker conclusion, but covers a wider
spectrum of parameters.

\begin{theorem}\label{thm:am-coam}
For any pair of constants $0 \leq \mathbf s < \mathbf c \leq 1$,
$SD_{\mathbf c, \mathbf  s}$ is in
$\am \cap \co\am$.
\end{theorem}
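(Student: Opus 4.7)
The plan is to exhibit separate $\am$ and $\co\am$ protocols for $SD_{\mathbf c,\mathbf s}$. Both use the identity
\[
\sd(p_0,p_1) \;=\; 1 - \expec_{y \sim p_0}\!\Bigl[\min\!\Bigl(1, \tfrac{N_1(y)}{N_0(y)}\Bigr)\Bigr], \qquad N_b(y) := |C_b^{-1}(y)|,
\]
which is immediate from $\sd = 1 - \sum_y \min(p_0(y), p_1(y))$ together with $p_b(y) = N_b(y)/2^n$. Arthur samples $y \sim p_0$ efficiently by evaluating $C_0$ on uniform randomness; the only quantities he must verifiably approximate are the ratios $N_1(y_i)/N_0(y_i)$ at each sampled point. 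Both the lower-bound direction ``$N_b(y) \geq K$'' and the upper-bound direction ``$N_b(y) \leq K$'' have constant-round $\am$ protocols via Goldwasser--Sipser set-size hashing, so each count can be certified within a $(1\pm\varepsilon)$ multiplicative slack in either direction.

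First I would construct the $\am$ protocol (accepting YES instances, $\sd \geq \mathbf c$). Arthur draws $k = \poly(n)$ samples $y_i \sim p_0$ and sends them to Merlin along with Goldwasser--Sipser challenges. Merlin returns values $\widehat N_b(y_i)$ and transcripts certifying $N_0(y_i) \leq (1+\varepsilon)\widehat N_0(y_i)$ and $N_1(y_i) \geq (1-\varepsilon)\widehat N_1(y_i)$ for each $i$; these are precisely the directions needed to upper-bound each term of the estimator
\[
\widehat{\sd} \;:=\; 1 - \frac{1}{k}\sum_{i=1}^k \min\!\Bigl(1, \tfrac{\widehat N_1(y_i)}{\widehat N_0(y_i)}\Bigr),
\]
so that a dishonest Merlin cannot inflate $\widehat{\sd}$ without failing a transcript. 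Arthur verifies every transcript in parallel and accepts iff $\widehat{\sd} \geq (\mathbf c + \mathbf s)/2$. The $\co\am$ protocol (rejecting NO instances $\sd < \mathbf s$) is symmetric: Merlin certifies the opposite-direction bounds on the $\widehat N_b(y_i)$ and Arthur accepts iff $\widehat{\sd} \leq (\mathbf c + \mathbf s)/2$.

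Completeness in each case is standard: the honest Merlin returns the true $N_b(y_i)$, every transcript passes, and Chernoff concentration on the $k$ iid terms combined with the multiplicative $\varepsilon$ slack yields $|\widehat{\sd} - \sd| < (\mathbf c - \mathbf s)/4$. Soundness reduces to soundness of Goldwasser--Sipser: any accepting transcript forces the claimed $\widehat N_b(y_i)$ to be a valid approximation in the certified direction, so on a NO instance the terms of $\widehat{\sd}$ are one-sidedly squeezed and the aggregate cannot cross the threshold.

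The main obstacle I anticipate is adversarial parallel composition: Merlin sees all of Arthur's samples $y_1,\ldots,y_k$ before replying and could trade accuracy across the $2k$ subprotocols to bias the estimator. I would handle this by amplifying each subprotocol to soundness error at most $1/k^2$ via standard $\am$ parallel repetition, union-bounding over the $2k$ parallel invocations, and choosing $\varepsilon$ so that the propagated multiplicative slack in each $\widehat N_1(y_i)/\widehat N_0(y_i)$ contributes at most $(\mathbf c - \mathbf s)/8$ additive error in every term. A secondary issue is $y_i$ with tiny $N_0(y_i)$, which are rare under $p_0$ and are absorbed by the $\min(1,\cdot)$ clipping within the error budget.
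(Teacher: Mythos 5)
There is a genuine gap, and it sits at the heart of your plan: the claim that both directions of approximate preimage counting, ``$N_b(y) \geq K$'' \emph{and} ``$N_b(y) \leq K$'', admit constant-round public-coin protocols ``via Goldwasser--Sipser set-size hashing.'' Goldwasser--Sipser hashing only certifies \emph{lower} bounds on set sizes. A general $\am$ protocol certifying (even multiplicatively approximate) \emph{upper} bounds on $|C_b^{-1}(y)|$ for an arbitrary circuit and an arbitrary point would in particular certify statements of the form ``$y$ has essentially no preimages,'' which is a $\co\np$-type statement (take $C_b$ to be the indicator of a CNF and $y=1$: you would be placing {\sc UnSAT} in $\am$ and collapsing the polynomial hierarchy). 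The known upper-bound counting protocols (Fortnow, Aiello--H{\aa}stad) evade this only because the verifier secretly holds a uniformly random element of the set being bounded; in your protocol Arthur does hold a secret preimage of $y_i$ under $C_0$ (the randomness he used to sample $y_i$), so an upper bound on $N_0(y_i)$ is potentially salvageable by that private-coin technique, but he holds no secret preimage of $y_i$ under $C_1$. Yet your $\am$ direction needs exactly an upper-bound certificate on $N_1(y_i)$ (soundness must prevent Merlin from underreporting $\widehat N_1(y_i)$ to inflate $\widehat{\sd}$), and as written you also mislabel the directions: certifying $N_1(y_i) \geq (1-\varepsilon)\widehat N_1(y_i)$ only stops Merlin from \emph{over}claiming $N_1$, which is not the cheating direction. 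So the soundness argument does not go through, and no amount of parallel repetition or choice of $\varepsilon$ repairs it, because the missing primitive is not believed to exist in $\am$ at all.

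For contrast, the paper's proof is structured precisely to avoid ever certifying an upper bound on a preimage count. The $\am$ direction needs no counting: the verifier flips a coin, samples from $C$ or $C'$, and asks the prover to identify the source, giving completeness $1/2+\mathbf c$ versus soundness $1/2+\mathbf s$, which is then amplified. For the $\co\am$ direction, the instance is first polarized via the Sahai--Vadhan XOR lemma to a gap of $\delta$ versus $3\delta$, and then $1-\sd(p,p')$ is written through the identity $\sum_t t\,(N(t)-N(t+1)) = (1-\sd(p,p'))\,2^n$, where $N(t)$ counts points with \emph{both} preimage counts at least $t$. After quantization this expresses the quantity the honest prover wants to establish as large as a nonnegative combination of the $N(t)$, so Goldwasser--Sipser \emph{lower}-bound certificates alone suffice: a cheating prover can only overstate the $N(t)$, hence only overstate $1-\sd$, and the lower-bound subprotocols catch exactly that. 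If you want to keep your sampling-based estimator, you would need to restructure it so that soundness likewise only ever requires lower bounds (or correctly import the secret-preimage upper-bound machinery where a secret is actually available), which is a substantially different argument from the one you sketch.
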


\subsection{An $\am$ protocol}
The following interactive protocol $P$ for $SD_{\mathbf c, \mathbf s}$
essentially appears in \cite{SaVa} but we rewrite it here for the precise
parameters we need:

\begin{itemize}
\item[\bf V:] Flip a fair coin. If heads, generate a random sample
  from $C$. If tails, generate a random sample from $C'$. Send the
  sample $x$ to the prover.
\item[\bf P:] Say if $x$ came from $C$ or from $C'$.
\item[\bf V:] If prover is correct accept, otherwise reject.
\end{itemize}

\begin{claim}\label{cl:AM-SD}
Protocol $P$ is an interactive proof for $SD_{\mathbf c, \mathbf s}$
with completeness
$1/2 +\mathbf c$ and soundness $1/2 + \mathbf s$.
\end{claim}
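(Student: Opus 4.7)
The plan is to compute the maximum acceptance probability of the protocol for a fixed pair of circuits $C, C'$ as a function of $\sd(p, p')$, and then read off completeness and soundness directly.

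First, I would argue that we may restrict attention to deterministic prover strategies: since the acceptance probability is linear in the prover's private coins, fixing the best setting of those coins gives a deterministic strategy that is at least as good. A deterministic prover strategy is then specified by a set $S \subseteq \B^n$ of samples on which the prover guesses ``$C$''.

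Next, by unrolling the protocol, the acceptance probability with strategy $S$ equals
\[
\tfrac{1}{2}\Pr_{x \sim p}[x \in S] \;+\; \tfrac{1}{2}\Pr_{x \sim p'}[x \notin S] \;=\; \tfrac{1}{2} + \tfrac{1}{2}\bigl(p(S) - p'(S)\bigr).
\]
This is maximized by the maximum-likelihood set $S^{*} = \{x : p(x) \geq p'(x)\}$, and the standard variational characterization of statistical distance gives $\max_S \bigl(p(S) - p'(S)\bigr) = \sd(p,p')$.

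Completeness and soundness then follow immediately: on YES instances $\sd(p,p') \geq \mathbf{c}$, so the honest prover using $S^{*}$ makes the verifier accept with probability matching the claimed completeness bound, while on NO instances the same identity caps the acceptance probability of every prover (including a cheating one) by the claimed soundness bound. I do not expect any step to be a real obstacle; the only technical ingredient is the variational identity for statistical distance, which is routine. The content of the claim is simply the observation that this protocol implements a Bayes-optimal distinguisher of $p$ and $p'$ under a uniform prior, so its success probability is pinned down exactly by $\sd(p,p')$.
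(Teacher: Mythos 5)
Your argument is essentially the paper's own proof: a prover strategy reduces to a set $T \subseteq \B^n$ of samples attributed to $C$, the acceptance probability is $\tfrac12 p(T) + \tfrac12 p'(T^c) = \tfrac12 + \tfrac12\bigl(p(T) - p'(T)\bigr)$, and the variational characterization of statistical distance identifies the optimum with the maximum-likelihood set. One point deserves attention: your (correct) formula gives maximum acceptance probability exactly $\tfrac12 + \tfrac12\,\sd(p,p')$, i.e.\ completeness $\tfrac12 + \mathbf c/2$ and soundness $\tfrac12 + \mathbf s/2$, which does not literally ``match the claimed completeness bound'' of $\tfrac12 + \mathbf c$ as you assert; the paper's own proof commits the same factor-of-two slip when it states that the honest prover's acceptance probability is exactly $1/2 + \sd(p,p')$. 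The discrepancy is harmless for the way the claim is used later (only the completeness--soundness gap matters, and it survives with the halved constants, e.g.\ $5/8+\delta/2$ versus $5/8-\delta/2$ in the $\co\am$ protocol for {\sc GPTC}), but you should either restate the claim with the constants $\tfrac12+\mathbf c/2$ and $\tfrac12+\mathbf s/2$ or refrain from saying your computation matches the stated bounds.
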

\begin{proof}
We prove soundness first. Let $T$ be the set of $x$s which the prover
claims came from $C$. The accepting probability is
\[ \sum\nolimits_{x \in T} \frac{p(x)}{2} + \sum\nolimits_{x \not\in
  T} \frac{p'(x)}{2}
   = \frac{1}{2} \bigl(\sum\nolimits_{x \in T} p(x) + \sum\nolimits_{x
     \not\in T} p'(x)\bigr). \]
No matter what $T$ is, we have that $$\frac{1}{2}(\sum_{x \in T}
p(x)+\sum_{x \not\in T} p'(x)) = \frac{1}{2}(1- \sum_{x \not\in T}
p(x)+\sum_{x \not\in T} p'(x)) \leq 1/2 + \sd(p, p'),$$ and so the
accepting probability is at most $1/2+\mathbf s$.

To prove completeness, notice that the above inequality is tight when
$T$ equals the set of those $x$ such that $p(x) > p'(x)$. So when the
prover uses this strategy (say $C$ if $p(x) > p'(x)$ and $C'$
otherwise), the accepting probability becomes exactly $1/2 +
\sd(p,p')\geq  1/2+\mathbf c$.
\end{proof}

\subsection{A $\co\am$ protocol}
Showing that  $SD_{\mathbf c, \mathbf s}$ is in $\co\am$ is a bit more
involved. Such
a protocol wants to accept when the statistical distance between $p$
and $p'$ is small, and reject when the statistical distance is
large. To develop some intuition, let us first attempt to distinguish
the cases when $p$ and $p'$ are the same distribution (i.e. $\mathbf s = 0$)
and the case when they are at some distance from one another (say $\mathbf c =
1/2$).

Let's forget for a moment that the verifier has to run in polynomial time. Suppose the verifier could get hold of the values
\[
N(t) = |\bigabs{\{\omega\colon \text{$\abs{C^{-1}(\omega)} \geq t$ and $\abs{C'^{-1}(\omega)} \geq t$}\}}|
\]
for every $t$ (which could potentially range between $0$ and $2^n$). Then it can compute the desired statistical distance via the following identity which will be proven later:
\begin{equation}
\label{eqn:sdother}
\sum_{t = 1}^{2^n} t \cdot (N(t) - N(t+1)) = (1 - \sd(p, p')) \cdot 2^n.
\end{equation}
If we want the verifier to run in polynomial time, there are two issues with this strategy: First, the verifier does not have time to compute the values $N(t)$ and second, the verifier cannot evaluate the exponentially long summation in (\ref{eqn:sdother}). If we only want to compute the statistical distance approximately, the second issue can be resolved by quantization: Instead of computing the sum on the left for all the values of $t$, the verifier chooses a small number of representative values and estimates the sum approximately. For the first issue, the verifier will rely on the prover to provide (approximate) values for $N(t)$. While the verifier cannot make sure that the values provided by a (cheating) prover will be exact, she will be able to ensure that the prover never grossly over-estimates the sum on the left by running a variant of the Goldwasser-Sipser protocol which we describe below. Since the sum on the left is proportional to one minus the statistical distance, it will follow that no matter what the prover's strategy is, he cannot force the verifier to significantly underestimate the statistical distance without being detected.

We now give the details of this protocol, starting with a proof of~(\ref{eqn:sdother}).

\begin{proofof}{Proof of identity (\ref{eqn:sdother})}
Let $f(\omega) = \min\{\abs{C^{-1}(\omega)}, \abs{C'^{-1}(\omega)}\}$. Then
\[ \sum_{\omega \in \B^n} f(\omega) = \sum_{t = 1}^{2^n} t \cdot \bigabs{\{\omega\colon f(\omega) = t\}}
   = \sum_{t = 1}^{2^n} t \cdot \bigl(\abs{\{\omega\colon f(\omega) \geq t\}} - \abs{\{\omega\colon f(\omega) \geq t + 1\}}\bigr). \]
The right-hand side of this expression is exactly equal to the left-hand side of (\ref{eqn:sdother}). For the left-hand size, using the formula $\min\{a, b\} = (a + b)/2 - \abs{a - b}/2$ (where $a, b \geq 0$) we have
\[
\sum_{\omega \in \B^n} f(\omega)
= \frac12 \sum_{\omega \in \B^n} \bigl(\abs{C^{-1}(\omega)} + \abs{C'^{-1}(\omega)}\bigr) - \frac12\sum_{\omega \in \B^n} \bigabs{\abs{C^{-1}(\omega)} - \abs{C'^{-1}(\omega)}}
= 2^n - \sd(p, p') \cdot 2^n
\]
which equals the right-hand side of (\ref{eqn:sdother}).
\end{proofof}

\paragraph{A lower bound protocol for $N(t)$}
We now show that a variant of the Goldwasser-Sipser lower bound protocol can be used to certify lower bounds on the quantities $N(t)$. More precisely, we design an $\am$ protocol for the following problem:

\medskip
\noindent{\bf Input:} A pair of circuits $C, C'\colon \B^n \to \B^n$, a number $1 \leq t \leq 2^n$, a target number $0 \leq \tilde{N} \leq 2^n$, and a fraction $0 < \delta \leq 1$ (represented in unary). \\
\noindent{\bf Yes instances:} $(C, C', t, \tilde{N}, \delta)$ such that $N(t) \geq \tilde{N}$ \\
\noindent{\bf No instances:} $(C, C', t, \tilde{N}, \delta)$ such that $N((1 - \delta)t) < (1 - \delta)\tilde{N}$.
\medskip

Here is a protocol for this problem. Here, $\delta_1, \delta_2$ are the largest values below $\delta$ that make the logarithms below integers. In the analysis, for simplicity we will assume that $\delta_1 = \delta_2 = \delta$.
\begin{itemize}
\item[{\bf V:}] Set $a = \log(\delta_1^2 \tilde{N}/54)$. Send a random hash function $g\colon \B^n \to \B^a$.
\item[{\bf P:}] Let $c = \floor{(1 - \delta_1/2)(54/\delta_1^2)}$. Send a set of values $\{\omega_1, \dots, \omega_c\}$.
\item[{\bf V:}] Set $b = \log(\delta_2^4 t/5000)$. Send a random hash function $h\colon \B^n \to \B^b$.
\item[{\bf P:}] Let $d = \floor{(1 - \delta_2/2)(5000/\delta^4)}$. For each $1 \leq i \leq c$, send sets $\{r_{i1},\dots,r_{id}\}$ and $\{r'_{i1},\dots,r'_{id}\}$.
\item[{\bf V:}] If $g(\omega_i) = 0$ for all $i$ and $h(r_{ij}) = h(r'_{ij}) = 0$ and $C(r_{ij}) = C'(r'_{ij}) = \omega_i$ for all pairs $(i, j)$, accept, otherwise reject.
\end{itemize}

We first prove completeness: If $(C, C', t, \tilde{N}, \delta)$ is a yes instance, the protocol accepts with probability at least $2/3$. Let
\[ S = \{\omega\colon \text{$\abs{C^{-1}(\omega)} \geq t$ and $\abs{C'^{-1}(\omega)} \geq t$}\}. \]
The expected number of $\omega \in S$ with $g(\omega) = 0$ is at least $(54/\delta^2)\cdot(N(t)/\tilde{N})$. If $N(t) \geq \tilde{N}$, by Chebyshev's inequality, the probability over $g$ of getting fewer than $c = (1-\delta)(54/\delta^2)$ such $\omega_i$s is at most $1/6$. Assuming all these $\omega_i$s exist, let's fix one of them. We now look at the set $T_i = \{r\colon C(r) = \omega_i\}$. Since $\omega_i \in S$, $T$ has size at least $t$, so the expected number $r \in T_i$ such that $h(r) = 0$ is at least $5000/\delta^4$. By Chebyshev's inequality, the probability of getting fewer than $d$ such $r_{ij}$s is at most $\delta^2/1248$. This bound holds for every $i$ and also for the sets $T'_i = \{r\colon C'(r) = \omega_i\}$. Taking a union bound over all $2c$ such sets we get that with probability at least $5/6$ over the choice of $h$, a sufficient number of $r_{ij}$s and $r'_{ij}$s exist for all values of $i$, so the verifier accepts.

We now prove soundness: If $(C, C', t, \tilde{N}, \delta)$ is a no instance, the protocol accepts with probability at most $2/3$. Now let
\[ S = \{\omega\colon \text{$\abs{C^{-1}(\omega)} \geq (1 - \delta)t$ and $\abs{C'^{-1}(\omega)} \geq (1 - \delta)t$}\}. \]
The expected number of $\omega \in S$ with $g(\omega) = 0$ is then at most $(1-\delta)(54/\delta^2)$. In this case, $c$ is at least equal to $(1 + \delta/3)$ times this expected value. By Chebyshev's inequality, the probability that there exist $c$ such $\omega_i$s is then less than $1/6$. If not, then the prover is forced to send at least one $\omega_i$ such that either $g(\omega_i) \neq 0$ or $\omega_i \not\in S$. In the first case, the verifier rejects. In the second case, we let
\[ T_i = \{r\colon C(r) = \omega_i\} \qquad \text{and} \qquad T'_i = \{r\colon C'(r) = \omega_i\} \]
so either $\abs{T_i} < (1-\delta)t$ or $\abs{T'_i} < (1-\delta)t$. Without loss of generality, let us assume the first case. Then the expected number of $r \in T$ such that $h(r) = 0$ is at most $(1-\delta)(5000/\delta^4)$. We apply Chebyshev's inequality again to conclude that with probability at least $5/6$, the prover is then forced to send some $r_{ij}$ such that either $h(r_{ij}) \neq 0$ or $C(r_{ij}) \neq \omega_i$. Thus the verifier accepts with probability at most $1/6 + 1/6 \leq 1/3$.

Repeating this protocol in parallel sufficiently many times, we have the following consequence, which we will use below:

\begin{claim}
\label{claim:lower}
There is an $\am$ lower bound protocol for $N(t)$ with completeness $1 - \delta/20n$ and soundness $\delta/20n$.
\end{claim}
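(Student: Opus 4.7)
The plan is to obtain the claim by standard parallel amplification of the $(2/3, 1/3)$-protocol just described. Call that protocol $\Pi$ and denote its completeness and soundness errors by $\alpha_c \le 1/3$ and $\alpha_s \le 1/3$ respectively (the preceding analysis gives acceptance probability $\ge 5/6$ on YES instances and $\le 1/3$ on NO instances, so in fact the gap is even larger than $2/3$ vs $1/3$). Because $\Pi$ is public-coin and consists of the verifier sending the hash $g$, the prover replying, the verifier sending the hash $h$, and the prover replying, it fits the $\am$ framework (after collapsing messages in the standard way and appealing to the Babai--Moran round-reduction if one wants to view it as a two-message protocol).

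The amplified protocol $\Pi^{(k)}$ runs $k$ independent copies of $\Pi$ in parallel: the verifier samples $k$ independent pairs of hash functions $(g_i, h_i)$ and sends them simultaneously; the prover replies with $k$ independent transcripts; the verifier accepts iff a strict majority of the $k$ copies would accept. Parallel repetition preserves the public-coin, constant-round structure, so $\Pi^{(k)}$ is still an $\am$ protocol. The only care needed is that the prover's optimal strategy in each copy depends only on the verifier's coins in that copy, which is the case here because the copies are totally independent (independent hashes, and the predicate checked by the verifier in copy $i$ involves only $g_i, h_i$ and the prover's $i$-th reply).

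Setting $k = \Theta(\log(n/\delta))$, a Chernoff bound on the $k$ independent Bernoulli outcomes of the copies gives that the probability the majority vote disagrees with the correct answer is at most $\exp(-\Omega(k)) \le \delta/20n$ in both the YES and NO cases. Thus the amplified protocol has completeness at least $1 - \delta/20n$ and soundness at most $\delta/20n$, as required.

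The only step that requires any thought is the justification that parallel repetition actually drives the error down at the promised rate, since in full generality parallel repetition of interactive proofs need not reduce the soundness error in the naive way. For public-coin (and hence $\am$) protocols, however, the soundness analysis reduces to bounding, for each fixed prover strategy, the probability that a majority of the independent coin-tosses elicit an accepting reply; independence of the copies reduces this to a standard Chernoff bound, exactly as in the soundness proof of the single copy $\Pi$. So there is no real obstacle, and the whole argument is an invocation of standard amplification for $\am$.
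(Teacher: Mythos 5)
Your proposal is correct and is essentially the paper's own argument: the paper justifies Claim~\ref{claim:lower} in one line by repeating the $(2/3,1/3)$ Goldwasser--Sipser-style protocol in parallel polynomially many ($\Theta(\log(n/\delta))$) times with a majority vote, exactly as you spell out, and your remark that the public-coin structure is what makes the Chernoff-style soundness analysis go through is the right point to flag. One small slip: the preceding completeness analysis gives acceptance probability at least $1-1/6-1/6 = 2/3$ on YES instances (not $5/6$), but since you only use the $2/3$ vs.\ $1/3$ gap this does not affect the argument.
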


\paragraph{A $\co\am$ protocol for statistical distance}
We now give the $\co\am$ protocol for statistical distance. We begin
with the observation that it is sufficient to handle the following
special case of the problem:

\medskip
\noindent{\bf Input:} A pair of circuits $C, C'\colon \B^n \to \B$ and
a fraction $0 < \delta \leq 1/3$ (represented in unary). \\
\noindent{\bf Yes instances:} $(C, C', \delta)$ such that $\sd(p, p')
\leq \delta$ \\
\noindent{\bf No instances:} $(C, C', \delta)$ such that $\sd(p, p') > 3\delta$.
\medskip

We can reduce $SD_{c,s}$ for any pair of constants $0 \leq s < c \leq
1$ to the above problem via the XOR lemma of Sahai and
Vadhan~\cite{SaVa}, which reduces $SD_{c, s}$ to $SD_{c^k, s^k}$ for an
arbitrary constant $k$. When $k$ is chosen so that $(c/s)^k > 3$, the
resulting instance can be handled by our protocol.

We now give the protocol for statistical distance:
\begin{itemize}
\item[{\bf P:}] Send claims $\tilde{N}_i$ for the values $N_i = N((1 - \delta)^{-i})$, $0 \leq i \leq en/\delta$.
\item[{\bf P, V:}] Run the $\am$ lower bound protocol for $N_i$ on inputs $(C, C', (1 - \delta)^{-i}, \tilde{N}_i, \delta)$ for every $1 \leq i \leq en/\delta$. If all of them pass accept, otherwise reject.
\item[{\bf V:}] Accept if $\sum_{i=0}^{en/\delta} (\tilde{N}_i  -\tilde{N}_{i+1})(1 - \delta)^{-i} \geq (1-\delta)^2 \cdot 2^n$.
\end{itemize}

The soundness and completeness rely on the following approximation, which is a quantized version of (\ref{eqn:sdother}):
\begin{equation}
\label{eqn:sdapprox}
\sum_{i=0}^{en/\delta} (N_i - N_{i+1})(1 - \delta)^{-i} \leq (1 - \sd(p, p'))2^n \leq \sum_{i=0}^{en/\delta} (N_i - N_{i+1})(1 - \delta)^{-(i+1)}.
\end{equation}
This is proved in a similar way as (\ref{eqn:sdother}). For every $i$, we have the sandwiching inequality
\[ (N_i - N_{i+1})(1 - \delta)^{-i} \leq \sum\nolimits_{\omega\colon f(\omega) \in [(1-\delta)^{-i}, (1 - \delta)^{-(i+1)})} f(\omega)
   \leq (N_i - N_{i+1})(1 - \delta)^{-(i+1)}, \]
which yields (\ref{eqn:sdapprox}), after summing over all $i$ from $0$ to $en/\delta$.

To prove completeness, consider an honest prover which claims $\tilde{N}_i = N_i$ for all $i$. By Claim~\ref{claim:lower} and a union bound, with probability at least $2/3$ none of the lower bounds protocols for $N_i$ reject. In this case, using (\ref{eqn:sdapprox}), we get
\[ \sum_{i=0}^{en/\delta} (\tilde{N}_i  -\tilde{N}_{i+1})(1 - \delta)^{-i} \geq (1 - \delta)(1 - \sd(p, p')) \cdot 2^n \]
establishing completeness. To prove soundness, assume now that the verifier accepts with probability at least $1/3$. By the soundness of the lower bound protocols for $N_i$ (Claim~\ref{claim:lower}) and a union bound, there must exist at least one setting of the randomness of the verifier for which $N_{i-1} \geq (1 - \delta)\tilde{N}_i$ for all $i$ (where $N_{-1} = N_0$) and the verifier accepts. Now (using the fact that the last value of $N_i$ is zero):
\begin{align*}
\sum_{i=-1}^{en/\delta} (N_i - N_{i+1}) (1 - \delta)^{-i}
&= \frac{N_{-1}}{1-\delta} + \sum_{i=0}^{en/\delta-1} N_i\bigl((1 - \delta)^{-(i+1)} - (1 - \delta)^{-i})\bigr) \\
&\geq \tilde{N}_0 + \sum_{i=0}^{en/\delta-1} (1 - \delta)\tilde{N}_{i+1}\bigl((1 - \delta)^{-(i+1)} - (1 - \delta)^{-i})\bigr) \\
&= \delta\tilde{N}_0 + (1 - \delta) \cdot \sum_{i=0}^{en/\delta} (\tilde{N}_i  -\tilde{N}_{i+1})(1 - \delta)^{-i} \\
&\geq (1 - \delta)^3 \cdot 2^n
\end{align*}
so from (\ref{eqn:sdapprox}) we get that $1 - \sd(p, p') \geq (1 - \delta)^3$, so $\sd(p, p') \leq 1 - (1 - \delta)^3 \leq 3\delta$.

\section{Diagnosing Convergence for Polynomially Mixing
  Chains}\label{sec:poly-mixing}

The results of this section imply that even if the mixing time is
restricted to being polynomial the diagnostic problem remains
hard. The two cases we consider are the worst case start mixing time
and the mixing time from a given starting state. Both hardness results
are by reduction from a complete problem in the respective classes. We
first prove Theorem \ref{thm:szk-complete}.

\begin{lemma}\label{lem:in-szk} The problem {\sc GPTCS$_{c,\delta}$} is in
  $\szk$ for all $c \geq 1$ and $\frac{\sqrt{3} - 1.5}{2}  = .116025...<
  \delta \le 1/4$.
\end{lemma}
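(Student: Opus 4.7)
The plan is to reduce {\sc GPTCS$_{c,\delta}$} to the Statistical Distance problem {\sc SD}$_{\mathbf c, \mathbf s}$ and invoke the Sahai--Vadhan result that {\sc SD}$_{\mathbf c, \mathbf s} \in \szk$ whenever $\mathbf c^2 > \mathbf s$. The choice of gap is what forces the threshold $\delta > (\sqrt{3}-1.5)/2$.

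Given an instance $(C, x, 1^t, 1^{\tmax})$, I would build two polynomial-size circuits. The first, $C_1$, simulates $C$ for $t$ steps starting at $x$, so it samples from $\mu_1 := P^t(x,\cdot)$. The second, $C_2$, simulates $C$ for $T = k \cdot \tmax$ steps starting at $x$ (say $k = n$), so it samples from $\mu_2 := P^T(x,\cdot)$. Since $\tau(1/4) \le \tmax$ gives $d(\tmax) \le 1/4$, submultiplicativity of $d(\cdot)$ yields $d(T) \le (1/4)^k$, hence $d_{tv}(\mu_2, \pi) \le \eps$ with $\eps := (1/4)^n$, which is negligible. Both $C_1$ and $C_2$ are computable in time polynomial in the unary inputs $t$ and $\tmax$.

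Next I use the standard fact that $d_{tv}(P^s(x,\cdot),\pi)$ is non-increasing in $s$ (any stochastic update is a contraction in total variation since $\pi P = \pi$). In a YES instance we therefore have $d_{tv}(\mu_1,\pi) < 1/4 - \delta$, and in a NO instance, since $ct \ge t$, we have $d_{tv}(\mu_1,\pi) \ge d_{tv}(P^{ct}(x,\cdot),\pi) > 1/4 + \delta$. By the triangle inequality with $\mu_2$:
\[
\text{YES:}\ \sd(\mu_1,\mu_2) < 1/4 - \delta + \eps, \qquad
\text{NO:}\ \sd(\mu_1,\mu_2) > 1/4 + \delta - \eps.
\]
Thus $(C_1, C_2)$ is an instance of {\sc SD}$_{\mathbf c, \mathbf s}$ with $\mathbf c = 1/4 + \delta - \eps$ and $\mathbf s = 1/4 - \delta + \eps$. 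Applying Sahai--Vadhan, this is in $\szk$ provided $\mathbf c^2 > \mathbf s$, which for $\eps \to 0$ reduces to $(1/4 + \delta)^2 > 1/4 - \delta$, i.e.\ $\delta^2 + \tfrac32 \delta - \tfrac{3}{16} > 0$. The positive root is $\delta = (\sqrt{3} - 3/2)/2$, matching the lemma's threshold exactly. Since the slack in $\eps$ is exponentially small while the gap we need is a fixed positive constant for every $\delta$ strictly above this root, the inequality $\mathbf c^2 > \mathbf s$ is satisfied for all large enough $n$; small instances can be decided by brute force and absorbed into the reduction.

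The main technical point to be careful with is the monotonicity of distance to $\pi$ in $t$ (which is what lets the gap at time $ct$ propagate back to time $t$), together with the correct form of submultiplicativity for $d(\cdot)$ as defined in the paper. Everything else is a straightforward triangle-inequality bookkeeping argument; the algebraic verification that the Sahai--Vadhan condition $\mathbf c^2 > \mathbf s$ yields precisely the threshold $(\sqrt{3}-1.5)/2$ is the reason this exact constant appears in the statement.
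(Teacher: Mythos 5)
Your proposal is correct and follows essentially the same route as the paper: reduce to {\sc SD$_{\mathbf c,\mathbf s}$} by comparing the time-$t$ distribution $P^t(x,\cdot)$ with a circuit that runs the chain well past $\tmax$ (so its output is essentially stationary), use monotonicity of $d_{tv}(P^s(x,\cdot),\pi)$ to push the NO-case gap from time $ct$ back to time $t$, and invoke the Sahai--Vadhan condition $\mathbf c^2 > \mathbf s$, whose positive root gives exactly $(\sqrt{3}-1.5)/2$. The only cosmetic difference is that you obtain an exponentially small slack $\eps$ by running $n\tmax$ steps and using submultiplicativity, whereas the paper takes a constant slack $1/k$ (comparing against $P^{\tau(1/k)}(x,\cdot)$); if anything your version is more explicit about how the reference circuit is actually built.
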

\begin{proof}
The proof is by reduction to {\sc SD$_{\mathbf c,\mathbf s}$} where
$\mathbf c$ and $\mathbf s$ are
chosen as follows.
Choose $k$ large enough such that
\begin{eqnarray}\label{eq:c-s-condition}
\left(\frac{1}{4} +\delta - \frac{1}{k}\right)^2 > \frac{1}{4}
-\delta +\frac{1}{k}.
\end{eqnarray}
Let $$\mathbf s = \frac{1}{4} - \delta + \frac{1}{k}$$ and $$\mathbf c = \frac{1}{4} + \delta
-\frac{1}{k}.$$

Suppose we are given an instance of {\sc GPTCS}$_{c,\delta}$ with
input $(C,x,1^t,1^{t_{\max}})$. Let $\tau = \tau(1/k)$ be the time to come within
$1/k$ in variation distance of the stationary distribution. Let
$C$ output
the distribution $P^t(x,\cdot)$ over $\Omega$. Let $C'$ output the
distribution $P^{\tau}(x,\cdot)$ over $\Omega$.
In the YES case,
$$|P^t(x,\cdot) - P^{\tau}(x,\cdot)| \leq \frac{1}{4} -\delta + \frac{1}{k}$$
while in the NO case,
$$|P^{ct}(x,\cdot) - P^{\tau}(x,\cdot)| > \frac{1}{4} +
\delta -
\frac{1}{k}.$$
Since $c \geq 1$, this implies that
$$|P^t(x,\cdot) - P^{\tau}(x,\cdot)| > \frac{1}{4} +
\delta -
\frac{1}{k}.$$
By (\ref{eq:c-s-condition}), the constructed instance of {\sc
  SD$_{\mathbf c,\mathbf s}$} is in $\szk$ and the lemma follows.
\end{proof}

\begin{lemma}\label{lem:in-am-coam} The problem {\sc GPTCS$_{c,\delta}$} is in
  $\am \cap \co\am$ for all $c\geq 1$ and $0 < \delta \le 1/4$.
\end{lemma}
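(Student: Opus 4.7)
The plan is to run essentially the same reduction as in the proof of Lemma~\ref{lem:in-szk}, but to invoke Theorem~\ref{thm:am-coam} in place of the Sahai--Vadhan containment $\mathrm{SD} \in \szk$. The whole point is that Theorem~\ref{thm:am-coam} puts $\mathrm{SD}_{\mathbf c, \mathbf s}$ into $\am \cap \co\am$ for every pair $0 \le \mathbf s < \mathbf c \le 1$ with \emph{no} auxiliary requirement of the form $\mathbf c^2 > \mathbf s$. That requirement was the sole source of the awkward lower bound $\delta > (\sqrt{3}-1.5)/2$ in Lemma~\ref{lem:in-szk}; dropping it should immediately cover every $0 < \delta \le 1/4$.

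Concretely, given an instance $(C, x, 1^t, 1^{\tmax})$ of {\sc GPTCS}$_{c,\delta}$, I would choose an integer $k$ with $k > 1/\delta$ and pick any integer $\tau \ge \tau(1/k)$. The promise $\tau(1/4) \le \tmax$ together with submultiplicativity of $d(\cdot)$ yields $\tau(1/k) = O(\tmax \log k)$, so $\tau$ is polynomially bounded in the input length (treating $\delta$ as a constant). I would then build the two polynomial-size circuits $\widetilde{C}$ and $\widetilde{C}'$ that sample, respectively, from $P^t(x,\cdot)$ and $P^\tau(x,\cdot)$, simply by iterating $C$ the required number of times starting from $x$.

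The analysis of completeness and soundness of the reduction mirrors the one in Lemma~\ref{lem:in-szk}. The key monotonicity fact is that $s \mapsto \sd(P^s(x,\cdot),\pi)$ is non-increasing, since $\pi P = \pi$ and $P$ is a contraction in total variation; this, with the triangle inequality and the fact that $c \ge 1$, gives $\sd(\widetilde{C}, \widetilde{C}') \le 1/4 - \delta + 1/k$ in the YES case and $\sd(\widetilde{C}, \widetilde{C}') > 1/4 + \delta - 1/k$ in the NO case. Setting $\mathbf c = 1/4 + \delta - 1/k$ and $\mathbf s = 1/4 - \delta + 1/k$, the choice $k > 1/\delta$ is already enough to ensure $\mathbf s < \mathbf c$, so $(\widetilde{C},\widetilde{C}')$ is a valid instance of $\mathrm{SD}_{\mathbf c,\mathbf s}$, and Theorem~\ref{thm:am-coam} places it in $\am \cap \co\am$.

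I do not anticipate any serious obstacle: the heavy technical work has already been done inside Theorem~\ref{thm:am-coam}, and the lemma reduces to plugging its wider parameter range into the reduction of Lemma~\ref{lem:in-szk}. The only minor thing to double-check is that the polynomial-size bounds survive when $\delta$ is treated as a (possibly tiny) constant and when the $\co\am$ part of Theorem~\ref{thm:am-coam} is invoked via the XOR-lemma amplification step (to bring $\mathbf c, \mathbf s$ into the regime $0 < \delta \le 1/3$ that the direct $\co\am$ protocol handles).
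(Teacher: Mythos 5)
Your proposal is correct and matches the paper's own argument: the paper proves this lemma in one line by running the same reduction to {\sc SD}$_{\mathbf c,\mathbf s}$ used in Lemma~\ref{lem:in-szk} and then invoking Theorem~\ref{thm:am-coam}, which removes the $\mathbf c^2 > \mathbf s$ restriction and hence the lower bound on $\delta$. Your added remark about replacing $\tau(1/k)$ by an explicit, efficiently computable time $O(\tmax \log k)$ via submultiplicativity is a harmless (indeed welcome) elaboration of a detail the paper leaves implicit.
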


This part of the result follows directly from Theorem
\ref{thm:am-coam} by reducing  {\sc GPTCS$_{c,\delta}$} to {\sc
  SD}$_{\mathbf c,\mathbf s}$ as above, without the restriction on the
gap between $\mathbf c$ and $\mathbf s$.
We can show that the gap for $\delta$ in
Lemma \ref{lem:in-szk} is required for membership in $\szk$.
Sahai and
Vadhan~\cite{SaVa} show that when $\mathbf c^2 > \mathbf s$, {\sc
  SD$_{\mathbf c,\mathbf s}$} is in $\szk$. Holenstein and Renner
\cite{HR} show that this condition on the gap between $\mathbf c$ and
$\mathbf s$ is in fact essential for membership in $\szk$.

\begin{proposition}\label{prop:small-delta-hard} There exist $\mathbf
  c,\mathbf s$
  satisfying $\mathbf c^2<\mathbf s<\mathbf c$ and $c$ such that if
  there is an $\szk$
protocol for an instance of {\sc GPTCS$_{c,\delta}$} with a
sufficiently small $\delta$, then there is an $\szk$ protocol for
{\sc SD$_{\mathbf c,\mathbf s}$}.
\end{proposition}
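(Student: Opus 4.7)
The plan is to build a polynomial-time reduction from {\sc SD}$_{\mathbf c,\mathbf s}$ to {\sc GPTCS}$_{c,\delta}$ for carefully chosen constants; composing with the hypothesized $\szk$ protocol for the latter then yields an $\szk$ protocol for the former. I would fix $\delta_0 \in (0, (\sqrt 3 - 1.5)/2)$ and set $\mathbf c = 1/4 + \delta_0$ and $\mathbf s = 1/4 - \delta_0$; a direct calculation (exactly the one that forces the $\delta$ restriction in Lemma~\ref{lem:in-szk}) shows $\mathbf c^2 < \mathbf s < \mathbf c$ throughout this range. The reduction would work for any $\delta < \delta_0$ (the sense in which $\delta$ must be ``sufficiently small'') and any fixed $c \ge 1$, e.g., $c = 2$.

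Given circuits $C_0, C_1\colon \B^n \to \B^n$ computing distributions $p_0, p_1$, I would construct the following chain on state space $\Omega = \{x\} \cup \{0,1\}^n$, where $x$ is a distinguished starting state outside $\{0,1\}^n$. From $x$ the chain outputs a fresh sample $y \sim p_0$ via $C_0$. From $y \in \{0,1\}^n$ the chain jumps to $x$ with a tiny probability $\gamma$ (say $\gamma = 2^{-n}$), stays at $y$ with probability $(1-\gamma)\alpha$, or outputs a fresh sample from $p_1$ via $C_1$ with probability $(1-\gamma)(1-\alpha)$. I would take $\alpha = 1 - 1/T$ for a polynomial $T$ in $n$, set $t = 2$ and $\tmax = O(T)$, and choose $c$ so that $ct \ll T$. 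The circuit $C$ encoding these transitions is easy to write down given $C_0$ and $C_1$.

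The crux is analysing $d_t := \sd(P^t(x,\cdot),\pi)$. Modulo an $O(\gamma)$ error, the stationary distribution restricted to $\{0,1\}^n$ equals $p_1$; starting from $x$ the distribution at time $1$ is exactly $p_0$; and each subsequent step applies a lazy $p_1$-resample map, contracting the $\ell_1$ distance to $p_1$ by a factor $\alpha$. Hence $d_t \approx \alpha^{t-1}\sd(p_0,p_1)$, and for $T$ large we get $d_1 \approx d_{ct} \approx \sd(p_0,p_1)$ with $o(1)$ error. A similar but simpler calculation shows that from any starting state the chain mixes to within $1/4$ of $\pi$ within $O(T)$ steps (the spectral gap being $\Theta(1/T)$), so the promise $\tau(1/4) \leq \tmax$ holds.

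These estimates close the argument: if $\sd(p_0,p_1) \geq \mathbf c$, then $d_{ct} \approx \sd(p_0,p_1) \ge \mathbf c > 1/4 + \delta$, so $\tau_x(1/4+\delta) > ct$ and we get a NO instance of {\sc GPTCS}$_{c,\delta}$; if $\sd(p_0,p_1) < \mathbf s$, then $d_1 \approx \sd(p_0,p_1) < \mathbf s < 1/4 - \delta$, so $\tau_x(1/4-\delta) \leq 1 < 2 = t$ and we get a YES instance. Since $\szk$ is closed under complement, the reduction (which swaps YES and NO) combined with the assumed $\szk$ protocol for {\sc GPTCS}$_{c,\delta}$ produces an $\szk$ protocol for {\sc SD}$_{\mathbf c,\mathbf s}$. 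The main technical obstacle is the quantitative bookkeeping: verifying that the $O(\gamma)$ correction from the return probability and the $1 - \alpha^{ct-1}$ contraction factor both stay well inside the slack $\delta_0 - \delta$. This is a routine but careful computation once the linear update equations for $\mu_t(x)$ and $\mu_t|_{\{0,1\}^n}$ are written out explicitly.
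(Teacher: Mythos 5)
Your reduction is correct and can be pushed through; it plays the same role as the paper's proof but uses a different gadget. The paper also reduces {\sc SD}$_{\mathbf c,\mathbf s}$ to {\sc GPTCS}$_{c,\delta}$, but via a chain on $[m]\times\{0,1\}^n$ whose first coordinate is refreshed uniformly at random and whose second coordinate is resampled from $\mu_1$ or $\mu_2$ whenever the first coordinate equals $1$ or $2$; the stationary distribution is then the even mixture $\frac12\mu_1+\frac12\mu_2$, so the distance from stationarity one step after the start $(1,0^n)$ is $\frac12 \sd(\mu_1,\mu_2)$. That factor $\frac12$ places the SD thresholds near $1/2$ (the paper takes $\mathbf s=1/2-2\delta$, $\mathbf c=1/2+2\delta$ with $\delta<(\sqrt{5}/2-1)/2$) and the paper settles for $c=1$, $t=1$. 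Your chain instead relaxes from $p_0$ toward a stationary distribution that equals $p_1$ up to exponentially small error, so no factor of two is lost: the thresholds sit at $1/4\pm\delta_0$ (your computation that $\mathbf c^2<\mathbf s$ exactly when $\delta_0<(\sqrt{3}-1.5)/2$ is correct), and since the relaxation rate $1/T$ can be taken polynomially small compared with $ct$, your argument works for every fixed constant $c\ge 1$ rather than only $c=1$. Two of your choices are in fact more careful than the paper's: the exponentially rare reset to $x$ makes the chain irreducible on its reachable state space, so the ergodicity promise genuinely holds, and taking $t=2$ avoids an off-by-one issue (with $t=1$ the YES condition $\tau_x(1/4-\delta)<t$ would force the point mass at $x$ itself to be $(1/4-\delta)$-close to $\pi$); you also state explicitly that the reduction exchanges YES and NO instances and appeal to closure of $\szk$ under complement, which the paper leaves implicit. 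The bookkeeping you defer is indeed routine: with $\gamma=2^{-n}$ and $T$ a sufficiently large polynomial (chosen dyadic so the circuit can realize the transition probabilities exactly), all corrections are of order $2^{-n}\poly(n)$ plus a contraction loss of order $ct/T$, both absorbed by the slack $\delta_0-\delta$; the only extra remark needed in the NO case is that $d_s$ is decreasing in $s\ge 1$ for your chain, so the estimate at time $ct$ certifies $\tau_x(1/4+\delta)>ct$ for all earlier times as well.
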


\begin{proof} The proof is by reduction from {\sc SD$_{\mathbf
      c,\mathbf s}$} to {\sc
  GPTCS$_{c,\delta}$}. Let $(C,C')$ be an instance
of {\sc SD$_{\mathbf c,\mathbf s}$} where $C$ and $C'$ are
  circuits which output
distributions $\mu_1$ and $\mu_2$
over $\{0,1\}^n$. Construct
the Markov chain $P$, whose state space is $[m] \times \{0,1\}^{n}$
where $m=p(n)$ is a polynomial in $n$.
The transitions of the chain are defined as follows. Let the current
state be $(X_t,Y_t)$ where $X_t \in [m]$ and $Y_t \in \{0,1\}^{n}$.
\begin{itemize}
\item If $X_t = 1$, choose $Y_{t+1}$ according to $\mu_1$.
\item If $X_t = 2$, choose $Y_{t+1}$ according to $\mu_2$.
\item Otherwise, set $Y_{t+1} = Y_t$.
\item Choose $X_{t+1}$ uniformly at random from $[m]$.
\end{itemize}

The stationary distribution of the chain is given by $\pi(z,y) =
\frac{1}{m}(\frac12 \mu_1(y)+ \frac12 \mu_2(y))$. Take the starting state to be
$x=(1,0^n)$. In one step, the
total variation distance from stationary can be bounded as
\begin{eqnarray*}
d_{tv}(P(x, \cdot), \pi) = \frac{1}{2} d_{tv}(\mu_1,\mu_2)
\end{eqnarray*}

For $t>1$, we have
\begin{align}
P^t(x,\cdot) = U_{[m]} \times
  \left(1-\left(\frac{m-2}{m}\right)^{t-1}\right)(\frac 12 \mu_1+
  \frac12 \mu_2) +
  \left(\frac{m-2}{m} \right)^{t-1} \mu_1
\end{align}
Hence, it can be verified that
\begin{align}\label{eq:t-step-dist}
d_{tv}(P^t(x,\cdot),\pi) = \frac{1}{2}
 \left(\frac{m-2}{m}\right)^{t-1}d_{tv}(\mu_1,\mu_2)
\end{align}

Let $0< \delta < (\sqrt{5}/2-1)/2$, $\mathbf  s =1/2 - 2\delta$
and $\mathbf  c =1/2 +
2\delta$ so that $\mathbf c^2<\mathbf s<\mathbf c$. Set $c=1$, $t=1$
and $t_{\max}=m $.

In the YES case, $d_{tv}(\mu_1,\mu_2) < \mathbf s$ and hence after one step,

\begin{align}
d_{tv}(P(x,\cdot),\pi) < \frac{1}{2} \mathbf s < \frac 14 -\delta
\end{align}

In the NO case, $d_{tv}(\mu_1,\mu_2) > \mathbf c $ and after one step,
\begin{align}
d_{tv}(P(x,\cdot),\pi) \geq \frac{1}{2} \mathbf c > \frac 14 +\delta.
\end{align}

From (\ref{eq:t-step-dist}) it can be seen that in both cases,
$\tau(1/4) \leq m=t_{\max}$.
This completes the reduction since if there is an $\szk$ protocol for
{\sc GPTCS$_{c,\delta}$} with the above parameters, then
it can be used to distinguish the YES and NO case of {\sc
  SD$_{\mathbf c,\mathbf s}$} for the above values of $\mathbf c,
\mathbf s$.
\end{proof}

We now complete the proof of Theorem \ref{thm:szk-complete}.

\begin{lemma}Let $0 \le \delta < 1/4$. For $1 \leq c<\frac{\tmax}{4t}
  \ln \left(\frac{2}{1+4\delta}\right)$,  the 
  problem {\sc GPTCS$_{c,\delta}$} is $\szk$-hard.
\end{lemma}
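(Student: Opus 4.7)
The plan is to reduce from the $\szk$-hard problem {\sc SD}$_{\mathbf c, \mathbf s}$ (which is hard for any constants $0 \leq \mathbf s < \mathbf c \leq 1$ with $\mathbf c^2 > \mathbf s$) to {\sc GPTCS$_{c, \delta}$}, using the Sahai--Vadhan polarization lemma to take $\mathbf c$ arbitrarily close to $1$ and $\mathbf s$ arbitrarily close to $0$. Given circuits $C, C'$ sampling distributions $\mu_1, \mu_2$ on $\{0,1\}^n$, I reuse the Markov chain construction from the proof of Proposition~\ref{prop:small-delta-hard}: state space $[m] \times \{0,1\}^n$, first coordinate uniformly refreshed at each step, and second coordinate re-sampled from $\mu_1$ (resp.~$\mu_2$) exactly when the first coordinate is $1$ (resp.~$2$), and otherwise preserved. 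Take the starting state to be $(1, 0^n)$ and set $m = \tmax$.

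The stationary distribution $\pi(z, y) = \frac{1}{m} \cdot \frac{\mu_1(y) + \mu_2(y)}{2}$ and the identity
\[
d_{tv}(P^t((1, 0^n), \cdot), \pi) = \tfrac12 (1 - 2/m)^{t-1}\, d_{tv}(\mu_1, \mu_2)
\]
are already established in Proposition~\ref{prop:small-delta-hard}. A similar analysis for a worst-case start $(i, y_0)$ with $i \notin \{1, 2\}$ shows the distance to $\pi$ is at most $(1 - 2/m)^{t-1}$. With $m = \tmax \geq 4$, this is bounded by $(1 - 2/\tmax)^{\tmax - 1} \leq e^{-2(1 - 1/\tmax)} < 1/4$, so the promise $\tau(1/4) \leq \tmax$ holds.

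For the YES case of {\sc GPTCS$_{c, \delta}$} (equivalently, the NO case of {\sc SD}), polarization ensures $\mathbf s < 1/2 - 2\delta$, and hence $d_{tv}(P^t((1, 0^n), \cdot), \pi) \leq \mathbf s / 2 < 1/4 - \delta$. For the NO case (the YES case of {\sc SD}), we need $\tfrac{\mathbf c}{2} (1 - 2/m)^{ct - 1} > 1/4 + \delta$. Using $\ln(1 - x) \geq -2x$ for $x \leq 1/2$, we have $(1 - 2/\tmax)^{ct - 1} \geq e^{-4(ct - 1)/\tmax}$, so it suffices that $4(ct - 1)/\tmax < \ln(2\mathbf c / (1 + 4\delta))$. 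Taking $\mathbf c \to 1$ via polarization, this reduces to the hypothesized $c < \frac{\tmax}{4t} \ln(2/(1 + 4\delta))$, with any lower-order correction absorbed by tuning the polarization parameter.

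The main obstacle is the parameter accounting: one must push $\mathbf c$ and $\mathbf s$ simultaneously to the boundary via polarization, and the factor of $4$ in the denominator of the stated threshold emerges only from the combination of $x = 2/m$ and the second-order bound $\ln(1-x) \geq -2x$; a first-order approximation $\ln(1-x) \approx -x$ would produce a different constant. The choice $m = \tmax$ is essentially the largest compatible with the promise $\tau(1/4) \leq \tmax$ and thus maximizes the range of $c$ for which the NO case can be certified with distance exceeding $1/4 + \delta$ at time $ct$.
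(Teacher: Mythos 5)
Your reduction is essentially the paper's: the same chain on $[m]\times\{0,1\}^n$ from Proposition~\ref{prop:small-delta-hard}, the same identity for $d_{tv}(P^t(x,\cdot),\pi)$, and the same estimate $-\ln(1-2/m)\le 4/m$ producing the threshold $\frac{\tmax}{4t}\ln\left(\frac{2}{1+4\delta}\right)$; your explicit worst-case-start verification of the promise $\tau(1/4)\le\tmax$ is in fact more careful than the paper's. The one divergence is the source problem: the paper simply sets $\mathbf c=1$ and $\mathbf s=1/4-\delta$, whereas you polarize toward $\mathbf c\to 1$, and here the phrase ``lower-order correction absorbed by tuning the polarization parameter'' glosses over the only delicate point. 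The NO-case requirement is $(ct-1)\cdot\frac{4}{m}<\ln\frac{2}{1+4\delta}-\ln\frac{1}{\mathbf c}$, so the deficit incurred by $\mathbf c<1$ is $\frac{\tmax}{4}\ln(1/\mathbf c)$, which for any \emph{constant} $\mathbf c<1$ grows with $\tmax$ and swamps the slack of a single step once $c$ is close to the stated threshold; you need $1-\mathbf c=O(1/\tmax)$, i.e., a polarization parameter of order $\log\tmax$ chosen per instance. This is legitimate and polynomial time since $\tmax$ is given in unary, but it means you are really reducing from {\sc SD}$_{2/3,1/3}$ composed with instance-dependent polarization rather than from a fixed constant-gap {\sc SD} problem, and it should be said explicitly. (Also, $\mathbf c^2>\mathbf s$ is the condition for membership of {\sc SD}$_{\mathbf c,\mathbf s}$ in $\szk$, not for its $\szk$-hardness, though this does not affect your argument.) With that accounting made precise your proof goes through, and it even avoids the paper's reliance on hardness of {\sc SD} with first threshold exactly $1$.
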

\begin{proof}
The proof uses the same reduction as in Proposition
\ref{prop:small-delta-hard} from {\sc
  SD$_{\mathbf c,\mathbf s}$}. We recall that
\begin{align}\nonumber
d_{tv}(P^t(x,\cdot),\pi) = \frac{1}{2}
 \left(\frac{m-2}{m}\right)^{t-1}d_{tv}(\mu_1,\mu_2)
\end{align}

Choose $m \geq 3$. Set $\mathbf  s=1/4 -\delta$ and $\mathbf
c=1$. Note that $\mathbf c^2 > \mathbf s$. In the YES case,
$d_{tv}(\mu_1,\mu_2) < \mathbf  s$
and hence for any $t \geq 1$,
\begin{align}
d_{tv}(P^t(x,\cdot),\pi) < \frac{1}{2} \mathbf s < \frac 14 -\delta
\end{align}
In the NO case, $d_{tv}(\mu_1,\mu_2) > \mathbf c $ and hence
\begin{align}
d_{tv}(P^{ct}(x,\cdot),\pi) \geq \frac{1}{2}
 \left(\frac{m-2}{m}\right)^{ct-1} \mathbf c \geq \frac{1}{2}
 \left(\frac{m-2}{m}\right)^{ct-1} .
\end{align}

Since $m \geq 3$, if $ct< \frac{m}{4}
\ln\left(\frac{2}{1+4\delta}\right)$, then $d_{tv}(P^{ct}(x, \cdot),
\pi) > \frac14 +\delta$.
Further, we see that in both the YES and NO case,
\begin{align}
 \tau(1/4) \leq m
\end{align}
We conclude the reduction by setting $t_{\max}=m$.
\end{proof}

Next we prove Theorem~\ref{thm:conp-hard}
and classify the complexity of diagnosing mixing from an arbitrary
starting state given that the chain mixes in polynomial time.
We will use the following result relating mixing time to the
{\em conductance}.

\begin{definition}[Conductance, see e.g. \cite{Sin}]Let $M$ be a Markov chain
  corresponding to the random walk on an edge weighted graph with edge
  weights $\{w_e\}$.
Let $d_x$ denote the weighted degree of a vertex
  $x$. Define
  the {\em conductance} of $M$ to be $\Phi(M) := \min_{\varnothing
    \neq A \subsetneq  \Omega} \Phi_A(M)$ where
\begin{align}
\Phi_A(M) := \frac{\displaystyle\sum_{x \in A, y \in A^c} w_{xy}}
  {\displaystyle\sum_{x \in A} d_x}
\end{align}
\end{definition}

\begin{theorem}[see \cite{Sin}]\label{thm:mixing-conductance}
Let $M$ be a Markov chain
  corresponding to the random walk on an edge weighted graph with edge
  weights $\{w_e\}$ as above. Let $\pi$ be the stationary distribution
  of the Markov chain.
\begin{align*}
 \tau(\eps) \leq \frac{2}{\Phi^2(M)}
 \log\left( \frac{2}{\pi_{\min}\eps}\right)
\end{align*}
where $\pi_{min}$ is the minimum stationary probability of any vertex.
\end{theorem}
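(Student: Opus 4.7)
The plan is to prove this via the classical two-step route: conductance controls the spectral gap through Cheeger's inequality, and the spectral gap controls total variation distance through an $L^2$ bound.

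First, since $M$ is the random walk on an edge-weighted graph, it is reversible with $\pi(x) = d_x/(2\sum_e w_e)$, so the transition matrix $P$ is self-adjoint in the inner product $\langle f, g\rangle_\pi = \sum_x \pi(x) f(x) g(x)$. Its spectrum is thus real, with eigenvalues $1 = \lambda_1 \geq \lambda_2 \geq \cdots \geq -1$. Without loss of generality I would replace $P$ by the lazy chain $(P+I)/2$: this leaves $\pi$ and $\Phi$ unchanged, merely halves $1-\lambda_2$, and forces all eigenvalues to be nonnegative so that $\lambda_2 = \max(|\lambda_2|,|\lambda_{|\Omega|}|)$.

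Second, I would establish Cheeger's inequality $1 - \lambda_2 \geq \Phi^2/2$, which is the technical heart of the proof. Taking $f$ to be a second eigenvector (orthogonal to constants in $\langle \cdot, \cdot\rangle_\pi$) and sweeping a real threshold $t$, the level sets $S_t = \{x : f(x) > t\}$ form a family of cuts. Writing the Dirichlet form $\ecal(f,f) = \tfrac12 \sum_{x,y}\pi(x)P(x,y)(f(x)-f(y))^2$ and the variance $\var_\pi(f)$ as integrals against this family and applying Cauchy-Schwarz shows that at least one $S_t$ has conductance at most $\sqrt{2(1-\lambda_2)}$; this is where the square in $\Phi^2$ appears, and it is the step I expect to be the main obstacle.

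Third, with reversibility and nonnegative spectrum in hand, I would expand $P^t(x,\cdot)/\pi(\cdot)$ in an orthonormal basis of eigenvectors of $P$ and apply Cauchy-Schwarz to obtain the standard $L^2$ bound
\[
d_{tv}(P^t(x,\cdot), \pi)^2 \;\leq\; \frac{1}{4}\cdot\frac{1-\pi(x)}{\pi(x)}\,\lambda_2^{2t} \;\leq\; \frac{\lambda_2^{2t}}{4\pi_{\min}}.
\]
Combining with Cheeger and the elementary inequality $1 - u \leq e^{-u}$ yields
\[
d_{tv}(P^t(x,\cdot), \pi) \;\leq\; \frac{1}{2\sqrt{\pi_{\min}}}\,e^{-t\Phi^2/2}.
\]
Setting the right-hand side equal to $\eps$ and solving for $t$ gives $t \geq \frac{2}{\Phi^2}\log\frac{1}{2\eps\sqrt{\pi_{\min}}}$, which matches the stated bound $\frac{2}{\Phi^2}\log\frac{2}{\pi_{\min}\eps}$ up to absorbing $\sqrt{\pi_{\min}}$ into $\pi_{\min}$. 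Outside the Cheeger step, everything is routine algebra once a discrete Poincar\'e-type inequality is in hand.
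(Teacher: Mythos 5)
The paper itself gives no proof of this statement --- it is imported verbatim from Sinclair \cite{Sin} --- and the proof there is exactly the two-step route you outline: the hard direction of Cheeger's inequality $1-\lambda_2\ge\Phi^2/2$ via a threshold sweep of the second eigenfunction, followed by the reversible $L^2$ eigenfunction bound $d_{tv}(P^t(x,\cdot),\pi)\le\tfrac12\sqrt{(1-\pi(x))/\pi(x)}\,\lambda_*^{\,t}$. Your skeleton is the standard one, and your constants are even loose enough to absorb the factor $2$ hidden in the paper's definition of $d(t)$ as a maximum over \emph{pairs} of starting states rather than distance to $\pi$.

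The genuine gap is the ``without loss of generality'' lazification. First, replacing $P$ by $(P+I)/2$ does \emph{not} leave $\Phi$ unchanged: realizing the lazy chain as a weighted-graph walk means adding a loop of weight $d_x$ at every $x$, which doubles each denominator $\sum_{x\in A}d_x$ while leaving the cut weights alone (equivalently, the ergodic flow $\pi(x)P(x,y)$ across every cut is halved), so the conductance halves and your Cheeger step yields only $1-\lambda_2\ge\Phi^2/4$ in terms of the original $\Phi$, degrading the constant. Second, and more seriously, it is not a reduction at all: the theorem bounds $\tau(\eps)$ of the chain $M$ itself, and a mixing bound for $(P+I)/2$ implies nothing about $P$ --- for a bipartite weighted graph with no loops the lazy walk mixes while $P$ is periodic and never does, so no WLOG can dispose of the negative part of the spectrum. (This is really a looseness in the statement as quoted: it is applied in the paper only to chains that carry a self-loop at every state.) The honest fix is either to take laziness, i.e.\ self-loop probabilities at least $1/2$, as a hypothesis --- which is how Sinclair states the result --- or to keep $\lambda_*=\max(\lambda_2,|\lambda_{\min}|)$ in your $L^2$ bound and control the bottom of the spectrum separately, e.g.\ via $1+\lambda_{\min}\ge 2\min_x P(x,x)$. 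With that repair, the remainder of your argument is correct and recovers the stated bound with room to spare.
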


\begin{lemma}For every $c\geq 1, 0 < \delta \le 1/4$, {\sc
  GPTC$_{c,\delta}$} is in $\co\am$.
\end{lemma}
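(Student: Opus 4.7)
The plan is to exhibit an $\am$ protocol for the complement of {\sc GPTC$_{c,\delta}$}. The key observation is that $d(t)$ is non-increasing in $t$ (standard coupling/convexity argument: $P^{t+1}(x,\cdot)$ is a convex mixture of $P^t(u,\cdot)$'s, and variation distance is convex in each argument), so on a YES instance $d_{tv}(P^{ct}(x,\cdot),P^{ct}(y,\cdot)) \leq 1/4-\delta$ holds for every pair $x,y \in \Omega$, while on a NO instance there exists some pair $(x,y)$ with $d_{tv}(P^{ct}(x,\cdot),P^{ct}(y,\cdot)) > 1/4+\delta$. Since $t$ and $t_{\max}$ are supplied in unary, $ct$ is polynomial in the input length, so I can unroll $C$ to build polynomial-size circuits $C_x$ and $C_y$ that sample from $P^{ct}(x,\cdot)$ and $P^{ct}(y,\cdot)$, respectively.

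The protocol itself is short: Merlin sends a candidate witness pair $(x,y) \in \Omega^2$, and then Arthur and Merlin run the public-coin $\am$ protocol of Claim~\ref{cl:AM-SD} on the circuits $(C_x, C_y)$. On a NO instance Merlin picks a pair of distance strictly greater than $1/4+\delta$ and plays the optimal SD prover strategy, giving single-round acceptance at least $1/2+(1/4+\delta) = 3/4+\delta$. On a YES instance, for \emph{every} pair $(x,y)$ and \emph{every} cheating SD-prover strategy, the soundness bound of Claim~\ref{cl:AM-SD} caps single-round acceptance at $1/2+(1/4-\delta) = 3/4-\delta$. This $2\delta$ gap can then be amplified to the standard $(2/3, 1/3)$ completeness/soundness thresholds by parallel repetition of the SD subprotocol together with a Chernoff-based majority vote; since the $3/4-\delta$ soundness bound holds \emph{pointwise} for every prover message $(x,y)$, no union bound over Merlin's witness choice is required.

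Combined, the interaction has Merlin send $(x,y)$ first, followed by a constant number of public-coin rounds; by the Babai--Moran collapse theorem, every constant-round public-coin interactive proof lies in $\am$, placing the complement of {\sc GPTC$_{c,\delta}$} in $\am$ as desired. The one step that needs more than a citation is the monotonicity of $d(t)$, which I use to transfer the YES-case variation-distance bound from time $t$ to time $ct$; once that is in hand, everything else is a direct invocation of Claim~\ref{cl:AM-SD} together with standard facts about parallel repetition and constant-round collapse. The main (mild) subtlety to watch for is ensuring that the amplified SD protocol is instantiated with gap parameters depending only on $\delta$ (not on $x, y$), so that the same repeated subprotocol handles all possible Merlin openings uniformly.
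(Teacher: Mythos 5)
Your proposal is correct and takes essentially the same route as the paper: the paper's $\co\am$ protocol likewise has the prover send a distinguishing pair $(x,y)$, builds circuits for the iterated-step distributions (the paper uses time $t$, which suffices since $ct \ge t$ and $d$ is non-increasing; your use of $ct$ is an immaterial variant), and then runs the protocol of Claim~\ref{cl:AM-SD} with completeness $3/4+\delta$ and soundness $3/4-\delta$. The only nit is that the protocol of Claim~\ref{cl:AM-SD} is private-coin (the verifier's choice between $C_x$ and $C_y$ is hidden), so collapsing the composed constant-round proof to $\am$ requires Goldwasser--Sipser in addition to Babai--Moran; this is a standard fix and the paper elides it as well.
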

\begin{proof} In the first step of the $\co\am$ protocol for {\sc
    GPTC$_{c,\delta}$} the prover sends a pair $x,y \in \Omega$ that maximizes
$d_{tv}(P^t(x,\cdot),P^t(y,\cdot))$. Let $C_x$ be the cicuit which
  outputs the distribution
$P^t(x,\cdot)$ and let $C_y$ output the distribution $P^t(y,\cdot)$.

In the YES case
$\tau(1/4-\delta) <t$ and for every $x,y$,
$d_{tv}(P^t(x,\cdot),P^t(y,\cdot))  < 1/4 -\delta$. In the NO case,
$\tau(1/4+\delta) > ct$ and $c\geq 1$, therefore there must exist
$x,y$ such that
$d_{tv}(P^t(x,\cdot),P^t(y,\cdot))  >  1/4 +\delta$.

By Claim \ref{cl:AM-SD} there is an $\am$ protocol $P$ for {\sc
  SD}$_{1/4+\delta,1/4-\delta}$ with completeness $3/4+\delta$ and
soundness $3/4-\delta$. The prover and the verifier now engage in the
$\am$ protocol to distinguish whether the distance between the two
distributions is large or small. The completeness and soundness follow
from those of the protocol $P$.
\end{proof}

\begin{lemma}Let $0 \le \delta < 1/4$. For
  $1 \leq c < 1/2 \sqrt{\tmax/t^2n^3}(3/4-\delta)$,
  it is $\co\np$-hard to decide {\sc GPTC$_{c,\delta}$}.
\end{lemma}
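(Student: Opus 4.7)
The plan is to reduce the coNP-complete problem UNSAT to \textsc{GPTC$_{c,\delta}$}. Given a Boolean formula $\phi$ on $n$ variables I build a Markov chain $M_\phi$ on $\Omega=\{0,1\}^n$, together with polynomials $t$ and $t_{\max}$ and an (irrelevant) starting state, arranging that $\phi$ unsatisfiable forces $\tau(1/4-\delta)<t$ and $\phi$ satisfiable forces $\tau(1/4+\delta)>ct$, while the promise $\tau(1/4)\le t_{\max}$ always holds.

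The chain $M_\phi$ is a lazy Metropolis walk on the hypercube. From state $x$, with probability $1/2$ self-loop; otherwise pick $i\in[n]$ uniformly and propose $y=x\oplus e_i$, accepting with probability $q(x,y)$, where $q(x,y)=\epsilon$ if $\phi(x)\vee\phi(y)=1$ and $q(x,y)=1$ otherwise. Since $q$ is symmetric, detailed balance gives the uniform stationary distribution. When $\phi$ is unsatisfiable, $q\equiv 1$ and $M_\phi$ is the standard lazy hypercube walk, for which $\tau(1/4-\delta)=O(n\log n)$; setting $t=\Theta(n\log n)$ turns every UNSAT input into a YES instance. When $\phi$ has a satisfying assignment $x_0$, the self-loop probability at $x_0$ is at least $1-\epsilon/2$, so using $P^s(x_0,x_0)\ge (1-\epsilon/2)^s$ and $\pi(x_0)=2^{-n}$ one obtains
\[
d_{tv}\bigl(P^s(x_0,\cdot),\pi\bigr) \ \ge\ P^s(x_0,x_0)-\pi(x_0) \ \ge\ 1 - \epsilon s/2 - 2^{-n}.
\]
Choosing $\epsilon$ a little below $(3/2-2\delta)/(ct)$ makes the right-hand side exceed $1/4+\delta$ at $s=ct$, so $\tau(1/4+\delta)>ct$ and SAT forces a NO instance.

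For the promise I appeal to Theorem~\ref{thm:mixing-conductance}. A standard hypercube edge-isoperimetric estimate shows that every cut $A\subseteq\Omega$ with $\pi(A)\le 1/2$ contains at least $|A|$ hypercube edges, each carrying stationary weight at least $\epsilon/(2n\cdot 2^n)$, so the conductance of $M_\phi$ satisfies $\Phi\ge\Omega(\epsilon/n)$. Combined with $\pi_{\min}=2^{-n}$, the theorem gives $\tau(1/4)=O(n^3/\epsilon^2)$, polynomial in $n$ for my choice of $\epsilon$. I set $t_{\max}$ to any polynomial in $n$ that dominates this upper bound, e.g.\ $t_{\max}=n^{10}$; because $t_{\max}$ is given in unary the reduction remains polynomial-time.

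It then remains to check the quantitative constraint $c<\tfrac{3/4-\delta}{2}\sqrt{t_{\max}/(t^2 n^3)}$. With $t=\Theta(n\log n)$ and $t_{\max}=n^{10}$, the right-hand side grows like $\Theta(n^{3/2}/\log n)$, so the constraint is met by every constant $c\ge 1$ once $n$ is large enough; instances of bounded size can be handled directly. I expect the main technical obstacle to be the conductance lower bound $\Phi\ge\Omega(\epsilon/n)$: bottleneck sets around satisfying assignments must be compared carefully against generic hypercube cuts, and it is this $1/n$ loss that ultimately produces the factor $n^3$ in the theorem's bound.
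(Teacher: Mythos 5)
Your proposal is correct and follows essentially the same route as the paper: a reduction from UNSAT via a random walk on the hypercube that is slowed down at satisfying assignments (you via a Metropolis acceptance probability $\epsilon \approx 1/(ct)$, the paper via heavy self-loops of weight $n^d$ at satisfying states), with the YES case being the plain lazy walk mixing in $O(n\log n)$, the NO case pinned at a sticky satisfying state, and the promise $\tau(1/4)\le \tmax$ certified through the conductance bound of Theorem~\ref{thm:mixing-conductance}. The only blemish is a harmless arithmetic slip: with $t=\Theta(n\log n)$ and $\tmax=n^{10}$ the quantity $\frac{3/4-\delta}{2}\sqrt{\tmax/(t^2n^3)}$ grows like $n^{5/2}/\log n$, not $n^{3/2}/\log n$, which only gives you more room.
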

\begin{proof}
The proof is by reduction from {\sc UnSAT}, which is
  $\co\np$ hard. Let $\psi$ be an
  instance of {\sc UnSAT}, that is, a CNF formula on $n$
  variables. The vertices of the Markov chain are the vertices of the
  hypercube $H$, $V(H) = \{0,1\}^n$ and edges $E(H)  = \{(y_1,y_2):
  |y_1-y_2| =1 \}$.  We set edge weights for the Markov chain as
  follows. Let $d$ be a parameter to be chosen later which is at most
  a constant.
\begin{itemize}
\item For each edge in $E(H)$ set the weight to be 1.
\item If $\psi(y) = 0$ add a self loop of weight $n$ at $y$.
\item If $\psi(y) = 1$ add a self loop of weight $n^{d}$ at $y$.
\end{itemize}

In the YES case, if $ \psi$ is unsatisfiable, the Markov chain is just
the random walk on the
hypercube with probability $1/2$ of self loop at each vertex and it is
well known that
\begin{align*}
\tau(1/4 -\delta) \leq C_{\delta}n \log n
\end{align*}
where $C_{\delta}$ is a constant depending on $(1/4-\delta)^{-1}$ polynomially.

In the NO case, where $\psi$ is satisfiable, we will lower bound the
time to couple from a satisfying
state $y$ and the state $\overline{y}$, obtained by flipping all the
bits of $y$. Consider the distributions $X(t),Y(t)$ of the chain which
are started at $y$ and at $\overline{y}$. We can bound the variation
distance after $t$ steps as follows

\[
d(t) \geq 1- P[\exists s \leq t \ s.t.\  X(s) \neq y] - P[\exists s
  \leq t \ s.t.\  Y(s)=y]
\]
In each step, the chain started at $y$ has chance at most
$1/(n^{d-1}+1)$ of leaving.  On
the other hand, the
probability that the walk started from $\overline y$ hits $y$ in time $t$ is
exponentially small. Therefore

\[
d(t) \geq 1 - 2t/(n^{d-1}+1)
\]
which implies that
\[
\tau(1/4+\delta) >  \frac12 n^{d-1}(3/4-\delta).
\]

Choose $d$ to be a large enough
constant (which may depend polynomially on $\delta^{-1}$), such that
\begin{align}\label{eq:bound-ct}
 \frac12 n^{d-1}(3/4-\delta)  > cC_{\delta} n \log n.
\end{align}

On the other hand we can show a polynomial upper bound on the mixing
time by bounding the conductance as follows. Let $M'$ be the Markov
chain which is the random walk on the hypercube
with self loop probabilities of $1/2$ (where the edge weights are as
in the case where $\psi$ is unsatisfiable). We bound the conductance of $M$
by showing it is not too much smaller than the conductance of $M'$. We
use the fact that for any vertex $x$, the weighted degree $d_x \leq
(n^{d-1}+1) d_x'$.
Let $A \subseteq V(H).$
\begin{eqnarray*}
\Phi_A(M) = \frac{\displaystyle\sum_{x \in A, y \in A^c} w_{xy}}
  {\displaystyle\sum_{A \in \Omega} d_x}
 = \frac{\displaystyle\sum_{x \in A, y \in A^c} w'_{xy}}
  {\displaystyle\sum_{A \in \Omega} d_x}
\geq \frac{\displaystyle\sum_{x \in A, y \in A^c} w'_{xy}}
  {(n^{d-1}+1) \displaystyle\sum_{A \in \Omega} d_x'}
\geq \frac{\Phi_A(M')}{n^{d-1}+1} \geq \frac{1}{n^{d}+n}
\end{eqnarray*}

where we are assuming the lower bound on the conductance of the
hypercube is $\frac{1}{n}$?

We can lower bound $\pi_{min}$ by $1/(n2^{n-1}+n^d 2^n)$ and hence we
have for large enough $n$,
\[
\log (\pi_{min})^{-1} \leq 2n
\]

and hence by Theorem \ref{thm:mixing-conductance}, $\tau(1/4)
\leq 32 n^{2d+1}$.

The reduction can be completed by setting $x = 0^n$, the vector
of all $0$'s, $\tmax=32n^{2d+1}$ and $t = C_\delta n \log n$. By
(\ref{eq:bound-ct}), we see that $ct< 1/2
\sqrt{\tmax/n^3}(3/4-\delta)$ as required.
\end{proof}


\section{Estimating Mixing Time for Arbitrary Markov Chains}
In this section we prove Theorem
\ref{thm:pspace-complete}, saying that the problem of testing
convergence is $\pspace$-complete. The idea of the hardness result is
to simulate
any $\pspace$ computation by a Markov chain so that if there is an
accepting computation, the chain mixes quickly, while if the
computation does not accept then the chain takes much longer to
mix.



We also recall some standard complexity theory background.
\begin{definition}A problem $L$ is in $\pspace$ if there exists a
  Turing machine $M$ which on input $x$ of size $n$ uses a work tape
  with at most a polynomial $p(n)$ number of bits and outputs $M(x) =1$
  iff $x \in L$.
\end{definition}

\begin{definition}
A problem $L_1$ is {\em polynomial time reducible} to another problem
$L_2$ if there exists a polynomial time computable function
(i.e. there is a polynomial time TM which computes the output of $f$)
$f:\{0,1\}^*
\rightarrow \{0,1\}^*$ such that $x \in L_1$ iff $x \in L_2$.
\end{definition}

\begin{definition} A problem $L$ is {\em $\pspace$-hard} if any $A \in
  \pspace$ is polynomial time reducible to it.
\end{definition}

\begin{definition}A problem $L$ is in $BP_H\pspace$ if there is a
probabilistic polynomial space Turing machine $M$ which on input $x$
can flip any number of coins that is a bounded function in $|x|$ (but can only
store polynomially many of them), halts for every
setting of the tosses, and satisfies
\begin{itemize}
\item If $x \in L$ then $P_r(M(x) = 1) \geq 2/3$.
\item If $x \notin L$ then $P_r(M(x) = 1) < 1/3$.
\end{itemize}
\end{definition}

The following result can be deduced from Savitch's Theorem \cite{Sav}.

\begin{theorem}
$BP_H\pspace = \pspace$
\end{theorem}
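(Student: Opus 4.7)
The plan is to prove both containments. The inclusion $\pspace \subseteq BP_H\pspace$ is immediate since a deterministic polynomial-space machine is a trivial $BP_H\pspace$ machine that ignores its coins. For the reverse direction, given a $BP_H\pspace$ machine $M$ and input $x$, I would deterministically compute the acceptance probability $p$ of $M$ on $x$ to sufficient precision to decide whether $p \geq 1/2$, using only polynomial space. The first step is to model $M(x)$ as a Markov chain on the configuration space $\mathcal{C}$, which has cardinality at most $N := 2^{p(|x|)}$ since $M$ uses polynomial space. Single-step transition probabilities $P(c, c')$ are dyadic rationals with $\poly(|x|)$-bit denominators, computable in polynomial time from the description of $M$. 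Accepting and rejecting halting configurations are absorbing, and the assumption that $M$ halts on every coin setting ensures that the chain reaches an absorbing state with probability one, so $p = \sum_{c \text{ accepting}} P^T(c_0, c)$ for $T$ large enough, where $c_0$ is the initial configuration.

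To evaluate $P^T(c_0, c)$ in polynomial space, I would apply Savitch's iterated-squaring technique to the identity
\[ P^{2t}(c, c'') = \sum_{c' \in \mathcal{C}} P^t(c, c') \, P^t(c', c''). \]
Computing any single entry of $P^{2^k}$ reduces to iterating over $N$ choices of $c'$, each requiring two recursive evaluations of entries of $P^{2^{k-1}}$; these two sub-calls can be executed sequentially and share workspace. Consequently, computing one entry uses space $O(k \cdot (\log N + b))$ where $b$ is the arithmetic precision, which is polynomial as long as $k$ and $b$ are polynomial in $|x|$.

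The main obstacle is choosing $T = 2^k$ large enough that $P^T$ is close to the absorbing distribution, and choosing $b$ large enough to control accumulated rounding error. For the first issue, because every transient-to-transient transition probability is a rational with denominator at most $2^{\poly(|x|)}$, a standard argument bounds the spectral radius of the transient submatrix by $1 - 2^{-\poly(|x|)}$, so polynomially many squarings suffice to get within $1/12$ of the limiting absorption probabilities. For the second, a routine error-propagation analysis for iterated squaring of substochastic matrices shows that maintaining $\poly(|x|)$ bits of precision at each level keeps the cumulative error below a small constant. Amplifying the original $1/3$--$2/3$ gap of $M$ by a constant number of independent repetitions yields enough slack to absorb this approximation error, so comparing the resulting estimate to $1/2$ correctly decides membership, completing the deterministic polynomial-space simulation.
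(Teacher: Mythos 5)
Your overall strategy --- simulate the probabilistic machine by viewing its configuration space as a Markov chain and computing the acceptance probability by Savitch-style repeated squaring of the transition matrix with polynomially many bits of precision --- is exactly the argument the paper has in mind (the paper offers no proof, only the remark that the result "can be deduced from Savitch's Theorem"), and the trivial inclusion $\pspace \subseteq BP_H\pspace$ and the rounding/amplification bookkeeping are fine. However, there is a genuine flaw in the step that fixes the power $T = 2^k$: you claim that because transient-to-transient entries are rationals with $2^{\poly(|x|)}$-bit denominators, the spectral radius of the transient block is at most $1 - 2^{-\poly(|x|)}$, so polynomially many squarings suffice. That is false in general. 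Denominator size alone only guarantees that from each transient state some absorbing state is reached within $N = 2^{\poly(|x|)}$ steps with probability at least $2^{-\poly(|x|)\cdot N}$, so the spectral radius can be as large as $1 - 2^{-2^{\poly(|x|)}}$ even with one fair coin per step (e.g.\ a chain of $N$ states that resets to the start with probability $1/2$ at each step and is absorbed only from the last state). With such a chain you would need exponentially many squarings, and your recursion depth $k$ --- hence the space $O(k(\log N + b))$ --- would blow up to exponential.

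The missing ingredient is precisely the hypothesis that gives the class its subscript $H$: the machine halts on \emph{every} setting of the coin tosses. This implies that the reachable part of the configuration graph contains no cycle (a reachable cycle would yield an infinite non-halting coin sequence), so every computation path halts within $N = 2^{\poly(|x|)}$ steps; equivalently, the transient block $Q$ of the transition matrix is nilpotent with $Q^N = 0$. Hence for $T = 2^k \geq N$, i.e.\ $k = \poly(|x|)$, the entry $P^T(c_0,c)$ is \emph{exactly} the absorption probability, with no truncation error and no spectral estimate needed; the only error to control is the rounding error in the squaring recursion, which your precision argument already handles. With this replacement for the spectral-radius step, your proof goes through and coincides with the intended deduction from Savitch's Theorem.
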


The proof of Theorem \ref{thm:pspace-complete} now follows by the
following two lemmas. The following lemma uses the fact that the
$t$-step transition probabilities of a Markov chain can be
approximated in $BP_HSPACE$ (see for example \cite{Saks}). We
include all the details here for completeness.

\begin{lemma} For every $1 \le c \le \exp(n^{O(1)})$ and
  $\exp({-n^{O(1)}})< \delta \le 1/4$, the
  problem {\sc GTC$_{c,\delta}$}  is in $BP_H\pspace$.
\end{lemma}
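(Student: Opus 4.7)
My approach will be to design a randomized polynomial-space algorithm that decides {\sc GTC$_{c,\delta}$} with success probability at least $2/3$ and then invoke $BP_H\pspace = \pspace$. The starting point is the observation that $d(\cdot)$ is non-increasing and $c \ge 1$, so a YES instance satisfies $d(ct) \le d(t) < 1/4 - \delta$ while a NO instance satisfies $d(ct) > 1/4 + \delta$; it will therefore suffice to estimate the single quantity $d(ct) = \max_{x,y} d_{tv}(P^{ct}(x,\cdot), P^{ct}(y,\cdot))$ to additive error $\delta/2$ and compare it with $1/4$.

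The core subroutine will be a Monte Carlo estimator of $P^{ct}(x,\omega)$: simulate the chain by applying the circuit $C$ iteratively for $ct$ steps starting from $x$, drawing a fresh block of coins at each step, and record whether the final state equals $\omega$. A single trial uses only polynomial space (the current state, a step counter up to $ct \le \exp(n^{O(1)})$, and a position in the coin stream) even though it runs for exponential time, and averaging over $N$ independent trials yields $\hat P^{ct}(x,\omega)$. Because coin tosses are not stored, $N$ may be chosen as large as $\exp(n^{O(1)})$.

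I would then organize the overall algorithm as four nested loops: over ordered pairs $(x,y) \in \{0,1\}^n \times \{0,1\}^n$ while maintaining a running maximum; over $\omega \in \{0,1\}^n$ while accumulating $\hat d = \frac{1}{2}\sum_\omega |\hat P^{ct}(x,\omega) - \hat P^{ct}(y,\omega)|$; over the two sources $x$ and $y$; and finally over $N$ trials for each individual probability estimate. Choosing $N = \Theta(n \cdot 4^n/\delta^2)$, Hoeffding's inequality will bound the per-estimate failure probability by $2^{-(3n+5)}$, and a union bound over all $2 \cdot 2^{3n}$ estimates $\hat P^{ct}(\cdot,\cdot)$ produced by the algorithm will ensure that with probability at least $2/3$ every estimate lies within $\delta/2^{n+5}$ of its true value. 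In that event each $\hat d$ is within $\delta/16$ of the corresponding true total-variation distance, the computed maximum is within $\delta/16$ of $d(ct)$, and accepting iff this maximum is less than $1/4$ will correctly separate the YES and NO cases.

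The hard part, in my view, will not be the statistical analysis but the space accounting: the indices for $x$, $y$, and $\omega$, a sample counter bounded by $N$, a step counter bounded by $ct$, a hit counter, and the rational accumulator (denominator $N$, numerator at most $2^n N$) must all fit into polynomial space. Each does, because $n$, $\log N$, and $\log(ct)$ are polynomial in the input size---and this is precisely where the hypotheses $c \le \exp(n^{O(1)})$ and $\delta > \exp(-n^{O(1)})$ enter the argument. Savitch's theorem will then upgrade the resulting $BP_H\pspace$ algorithm to a deterministic $\pspace$ one.
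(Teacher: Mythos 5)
Your overall strategy is essentially the paper's: simulate the chain by iterating the circuit, keeping only the current state and counters in polynomial space while drawing exponentially many unstored coins, estimate the worst-pair total variation distance empirically, accept or reject by comparing with $1/4$, and invoke $BP_H\pspace = \pspace$. Two differences are worth noting but are not problems. First, you estimate $d(ct)$ and use monotonicity of $d(\cdot)$ on the YES side, whereas the paper estimates $d(t)$ and uses monotonicity on the NO side ($d(t) \geq d(ct) > 1/4+\delta$); the two are interchangeable, though the paper's choice has the minor advantage that the algorithm never needs the value of the parameter $c$. Second, your sample size $N = \Theta(n\,4^n/\delta^2)$ with per-coordinate accuracy $\delta/2^{n+5}$ is actually more careful than the paper's $N = 48n\delta^{-2}$: since the total variation distance is a sum over up to $2^n$ states, a per-state accuracy of order $\delta$ alone does not bound the error of the summed estimate, so your exponential $N$ (which is harmless, as only $\log N$ bits are stored) is the right call.

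There is, however, one genuine gap: you maximize over all ordered pairs $(x,y) \in \{0,1\}^n \times \{0,1\}^n$, but $d(ct)$ is a maximum over the state space $\Omega$, which may be a proper subset of $\{0,1\}^n$. The circuit is only required to agree with $P$ on states of $\Omega$; on strings outside $\Omega$ its behavior is arbitrary. For instance, if $C(z,r) = z$ for every $r$ for some junk string $z \notin \Omega$, then $z$ looks like an absorbing state far from the rest, your running maximum exceeds $1/4$, and the algorithm rejects even a YES instance. You need to restrict the outer loops to $\Omega$. The paper does this by observing that, by the ergodicity promise, $\Omega$ is exactly the set of states reachable from the input start state, and reachability in the transition graph can be tested in polynomial space (a state $y$ is adjacent to $v$ iff $C(v,r) = y$ for some random string $r$, which can be checked by enumerating all $r$, and reachability over the $2^n$-vertex graph then lies in $\pspace$). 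With the maximization restricted in this way (and noting that a simulation started inside $\Omega$ never leaves $\Omega$), the rest of your statistical and space accounting goes through.
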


\begin{proof}The proof is by showing that there is a
  randomized algorithm $A$ for {\sc GTC$_{c,\delta}$} with 2-sided
  error using at
  most a polynomial amount of space and exponentially many
  random bits. In particular, the algorithm $A$ on input $X=(C,\hat x,t)$
  queries $C$ at most exponentially many times and
\begin{itemize}
\item If $\tau(1/4 - \delta) \leq t$, $P(A(X,r) = 1) \geq 2/3.$
\item If $\tau(1/4 + \delta) > ct$, $P(A(X,r) = 1) < 1/3.$
\end{itemize}

We show below an algorithm to calculate a $\delta$ additive approximation $\hat
d(t)$ to $d(t)$ with probability at least $2/3$.
The algorithm accepts if $\hat d(t) \leq 1/4$ and rejects otherwise.

In the YES case, $d(t) \leq 1/4 - \delta$ and therefore with
probability at least $2/3$, $\hat d(t) \leq 1/4$ and the
algorithm will accept. In the NO case, $d(ct) >  1/4 + \delta$. Since
the distance $d$ is non-increasing (see e.g. Chapter 4 in
\cite{AlFi}),  $d(t) > 1/4 + \delta$. Therefore, with
probability at least $2/3$, $\hat d(t) > \frac14$ and the algorithm
will reject.

The algorithm to compute $\hat d(t)$ is as follows. Note first that it
is possible to enumerate over all elements of the
state space of the chain $\Omega$ using at most a polynomial amount of
space. It is enough to check for each state whether it is reachable
from $\hat x$ which can be done in $\pspace$ once we can enumerate all
the adjacencies $y$ for a vertex $v$. But this can be done in $\pspace$ by
running over all possible random strings and checking if for some $r$,
$C(v,r) = y$.

%

For $x \in \Omega$ the algorithm runs the chain for $t$ steps
$N$ times, starting at $x$ each
time, and sets $f_{x,z}$ to be the fraction of times the chain stops at
$z$. The estimates $f_{x,z}$ and $f_{y,z}$ can be computed with a polynomial
amount of space in this way. Let $$M_{xy}^t = \frac{1}{2}\sum_{z \in \Omega}
|f_{x,z} - f_{y,z}|.$$ $M_{xy}^t$ can be computed with a polynomial
amount of space by running
over all $z$. Let $$\hat d(t) = \max_{x,y} M_{xy}^t.$$

There are two sources of error in the estimate for $P^t(x,z)$. The
first is due to
using only a polynomial amount of
space, whereas the $t$-step probabilities may be doubly
exponentially small. The size of the error is inversely exponential in
the space we use. This (additive) error can be bounded by
$\delta_a = \delta/4$ using a polynomial amount of space since
$\delta $ is always at least $\exp(-n^{O(1)})$. The second
source of error $\delta_r$ is random and can be bounded by $\delta/4$
by Chernoff bounds for an overall error of at most $\delta/2$.
Thus, if the number of runs $N$ is at least
$48n\delta^{-2}$, by Chernoff bounds, $$P(|P^t(x,z) - f_{x,z}| >
\delta/2 ) \leq 2^{-3n-2}.$$


Therefore, for every $x,y$, taking union over all $z$,
$$P(|M_{xy}^t- d_{tv}(P^t(x,\cdot),P^t(y,\cdot))| > \delta) \leq
2^{n}2^{-3n-2} \leq 2^{-2n-2}.$$

Therefore, we have
\begin{eqnarray*}
P(|\hat d(t) - d(t)| > \delta) \leq P(\exists \ x,y \ \ \text{s.t.} \ \
|M_{xy}^t- d_{tv}(P^t(x,\cdot),P^t(y,\cdot))| > \delta) \leq \frac{1}{4}
\end{eqnarray*}
where the last inequality follows by taking the union over all $x,y$.
\end{proof}

\begin{lemma}
For every  $1 \le c \le \exp(n^{O(1)})$ and $ 0 \leq
\delta < 1/4$, it is  $\pspace$-hard to decide {\sc GTC$_{c,\delta}$}.
\end{lemma}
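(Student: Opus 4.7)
The plan is to reduce from a generic $\pspace$-complete language $L$ to {\sc GTC$_{c,\delta}$} by encoding a polynomial-space computation into the mixing behavior of a Markov chain specified by a small circuit. Fix a deterministic polynomial-space TM $M$ deciding $L$ on space $s = p(n)$. After standard transformations (Bennett's reversible simulation and padding to a fixed running time $T \leq 2^{O(s)}$), I may assume $M$ is reversible, has a unique initial configuration $c_0$, and has a unique accept configuration $c_{\text{acc}}$, each easily identified by polynomial-size circuits. From the input $x$ the reduction outputs a tuple $(C_x, c_0, t)$, where $C_x$ is a polynomial-size circuit specifying a reversible ergodic chain on $\Omega = \{0,1\}^{s}$, and $t$ is given in binary.

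The local rule of $C_x$ combines three ingredients. First, a lazy reversible step along the computation trajectory: with small equal probability move to $\delta_M(c)$ or $\delta_M^{-1}(c)$, otherwise stay. Second, with probability $\eps \ll 1/T$ jump to a uniformly random state in $\Omega$, a \emph{narrow door} ensuring ergodicity regardless of the computation. Third, at $c_{\text{acc}}$ the jump-to-uniform probability is boosted to a constant, installing a \emph{wide door}. All three moves depend only on $c$, $\delta_M(c)$, $\delta_M^{-1}(c)$, and the predicate ``$c = c_{\text{acc}}$'', each computable by polynomial-size circuits from $x$. Reversibility is preserved because trajectory moves come paired with their inverses and the jump-to-uniform move is symmetric.

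I then analyze mixing case-by-case. In the YES case, a lazy random walk from $c_0$ reaches $c_{\text{acc}}$ in $\Theta(T^2)$ steps (standard mixing of a random walk on a path of length $T$); once there the wide door equilibrates in $O(1)$ further steps, giving $\tau(1/4 - \delta) = O(T^2)$. In the NO case, $c_{\text{acc}}$ lies off the orbit of $c_0$ and is never reached by the trajectory dynamics, so the only route to equilibrium is via the narrow $\eps$-door. Applying a conductance argument to the cut $A$ equal to the orbit of $c_0$ (whose only exit edges come from the $\eps$-door contributions) gives $\tau(1/4 + \delta) = \Omega(1/\eps)$. Choosing $\eps = 1/(c T^4)$ and setting $t = \Theta(T^2)$ yields a YES/NO separation by at least a factor of $c$, well within the regime $c \leq \exp(n^{O(1)})$; and $t$ in binary fits comfortably.

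The main obstacle is the conductance lower bound in the NO case: one must verify that the orbit of $c_0$ has small conductance in the chain, which in turn requires estimates of the stationary mass of this orbit and its boundary weight. Because the chain is reversible and the $\delta_M$-orbit of $c_0$ is either a simple cycle or path, its only non-$\eps$ crossings are internal, so $\Phi(A) = \Theta(\eps)$ becomes transparent once the stationary measure is pinned down; converting this to a mixing-time lower bound via Cheeger's inequality (the converse direction of Theorem~\ref{thm:mixing-conductance}) gives the claimed bound. Putting everything together yields a polynomial-time reduction from $L$ to {\sc GTC$_{c,\delta}$}, establishing $\pspace$-hardness.
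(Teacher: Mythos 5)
There is a genuine gap in your YES-case analysis, and it breaks the reduction. The problem {\sc GTC$_{c,\delta}$} is stated in terms of the \emph{worst-case-start} mixing time $\tau$ (the quantity $d(t)=\max_{x,y}d_{tv}(P^t(x,\cdot),P^t(y,\cdot))$), not $\tau_{c_0}$. Because you make the machine reversible (Bennett) and then take $\Omega=\{0,1\}^s$, the deterministic transition function decomposes $\Omega$ into many disjoint orbits, and only the orbit of $c_0$ contains $c_{\text{acc}}$ in the YES case. A ``garbage'' configuration on any other orbit has no wide door: its only way to equilibrate is the narrow $\eps$-door, so from such a start the distribution at time $t=\Theta(T^2)\ll 1/\eps$ is still concentrated on at most $O(t)$ states of its own orbit, a set of negligible stationary mass. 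Hence $d(t)$ is close to $1$ and $\tau(1/4-\delta)=\Omega(1/\eps)$ even in the YES case --- the same order as your NO-case lower bound --- so there is no YES/NO gap at all. (Your reversibility claim is also false as stated: the boosted jump at $c_{\text{acc}}$ is not symmetric, so detailed balance with respect to the uniform measure fails; but reversibility is not actually required by {\sc GTC}, so this is secondary. Likewise the ``$O(1)$ steps after hitting the wide door suffices'' claim needs care, since the stationary distribution is not the uniform distribution you jump to.)

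The paper avoids exactly this trap by \emph{not} working with isolated reversible orbits: its state space is the set of configurations, every configuration is joined by heavy (weight-$w$) edges along the machine's forward steps, so every state funnels into the component of $s_{\text{acc}}$ or $s_{\text{rej}}$; a heavy edge joins $s_{\text{rej}}$ to $s_{\text{start}}$, heavy self-loops are added, and a \emph{single} weight-$1$ edge joins $s_{\text{rej}}$ to $s_{\text{acc}}$. In the YES case the heavy graph is connected and a conductance bound gives mixing from every start in time $\mathrm{poly}(D)2^{O(n)}$; in the NO case the weight-$1$ edge is the only crossing between the start and accept components, traversed with probability at most $1/w$ per step, giving $d(t)\ge 1-2t/w$ and a mixing time $\ge w/4$ with $w$ chosen of order $c\,2^{3n}$. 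To salvage your construction you would need an analogous mechanism that drives \emph{all} states (not just the orbit of $c_0$) to a common region quickly in the YES case; as written, the reduction does not establish the required gap, so the proof is incomplete.
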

\begin{remark}
In fact, the conclusion holds even if the Markov chain is restricted
to be reversible. 
\end{remark}
\begin{proof}
We construct a polynomial time reduction
from any $A \in \pspace$ to {\sc GTC$_{c,\delta}$}. Equivalently, a poly-time
computable function $f$
which will map strings $y \in A$ to strings that are YES instances for
{\sc GTC$_{c,\delta}$} and $y \notin A$ to NO instances of {\sc
  GTC$_{c,\delta}$}.

Since $A \in
\pspace$, there a polynomial $n(m) \geq m$ and a Turing machine $M_A$
which on input $y$ of size 
$m$ uses at most $n=n(m)$ space and accepts if and only if $y \in A$.

For each $y \in \{0,1\}^m$, we define $f(y) = (C,x,t)$ as follows.

The state space $\Omega$ of the Markov chain $P$ is the set of all
possible configurations of the machine $M_A$ with input $y \in
\{0,1\}^m$. Since the machine uses space $n$, the state space of the
Markov chain is a subset of $\{0,1\}^n$.

Without loss of generality let $s_{start}$ be the
starting state corresponding to the input $y$, and without loss of
generality, let $s_{acc}$ and $s_{rej}$ be the unique accept and
reject states of the machine. In the case where either $s_{acc}$ or
$s_{rej}$ are never reached, 
they will be added to the state space.
The Markov chain $P$ is a reversible random walk defined by
setting edge weights as follows. There will be two types of weights:
$1$ and $w$ where 
\[
w = \lceil \frac{1000 D^3 c 2^{3 n}}{1-4\delta} \rceil. 
\]

\begin{itemize}
\item There will be a single edge of weight $1$ connecting $s_{rej}$
  and $s_{acc}$. 
\item For each pair of states that are connected by a single step of
  the machine they will be connected 
by a single edge of weight $w$.
\item
There will be an edge of weight $w$ connecting $s_{rej}$ to $s_{start}$.
\item
There will be a loop of weight $w$ connecting each state to itself.
\end{itemize}

A key role in the proof will be played by the graph $G$ of all states
of the machine connected by edges of weight $w$ corresponding to a
single step of the machine.  For any state of the machine, the number
of vertices connected to it 
by a transition of 
  $M_A$ is at most a constant denoted $D-2$ (depending on the finite number of
  states of the machine and a constant number of bits on the tape.)
  This implies in particular that the graph $G$ is of bounded degree $D$.

The circuit $C$ will specify this Markov chain. For a polynomial time
reduction, we require that the
description of the circuit is at most polynomial in $m$. Since $c \leq
\exp(n^{O(1)})$, and $m 
  \leq n$, all the probabilities of the Markov chain can be specified
  by polynomially many bits in $n$ and hence polynomially many bits in
  $m$. Secondly, because the TM
  reads and writes to only a small number of bits, we only have to
  check for a few vertices $v$ whether there is an edge from $u$ to $v$, and
  this can be done with a polynomial sized circuit.

Next, we show bounds on the mixing time with the edge weights as
defined above. For this we observe the following.

\begin{claim}\label{claim:yes-pspace}
In the YES case the graph $G$ is connected and $\tau(1/4-\delta) \leq
10 D^3 2^{3n}/(1-4\delta)$.
\end{claim}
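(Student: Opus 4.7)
My plan is to prove the two assertions of the claim in turn; both rely on a single structural observation about the trajectory of $M_A$. Since $M_A$ is deterministic and halts on input $y$, running it from $s_{start}$ produces a sequence of pairwise distinct configurations $s_{start} = c_0, c_1, \ldots, c_T$ (distinct because any repetition would force a nonterminating loop), which ends at $c_T = s_{acc}$ in the YES case, with $T < 2^n$ since each configuration uses at most $n$ bits. The set of configurations reachable from $s_{start}$ under $M_A$ is exactly $\{c_0, c_1, \ldots, c_T\}$, and by determinism the only weight-$w$ transition edges with both endpoints in this set are the consecutive edges $\{c_i, c_{i+1}\}$: a ``shortcut'' $\{c_i, c_j\}$ with $j > i+1$ would force either $c_i$ or $c_{j-1}$ to have two successors, contradicting determinism. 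Hence the transition subgraph on reachable configurations is a simple path from $s_{start}$ to $s_{acc}$.

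For connectedness of $G$, the state space $\Omega$ consists of these reachable configurations together with $s_{rej}$, which in the YES case is added by the construction since it is not reached by $M_A$. The construction also joins $s_{rej}$ to $s_{start}$ by a weight-$w$ edge, which belongs to $G$ (and accounts for part of the degree bound $D$). Therefore $G$ is a simple path with $s_{rej}$ as a pendant at the $s_{start}$ end, and in particular is connected.

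For the mixing-time bound I would apply Theorem~\ref{thm:mixing-conductance}. The underlying weighted graph of $P$ is $G$ together with self-loops of weight $w$ and a single additional weight-$1$ edge between $s_{acc}$ and $s_{rej}$. Every vertex has weighted degree at least $w$ (its self-loop) and at most $Dw + 1$, so
\[
\pi_{\min} \;\ge\; \frac{w}{|\Omega|\,(Dw+1)} \;\ge\; \frac{1}{D\cdot 2^{n+1}}.
\]
Since $G$ is essentially a path on at most $|\Omega| \leq 2^n+1$ vertices with every edge of weight $w$, any nonempty proper $A \subset \Omega$ has at least one weight-$w$ boundary edge while $\sum_{x\in A} d_x \leq |\Omega|(Dw+1)$, giving
\[
\Phi(P) \;\ge\; \frac{w}{|\Omega|\,(Dw+1)} \;\ge\; \frac{1}{D\cdot 2^{n+1}}
\]
(the extra weight-$1$ edge between $s_{acc}$ and $s_{rej}$ only adds to the boundary). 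Plugging these estimates into Theorem~\ref{thm:mixing-conductance} with $\varepsilon = 1/4 - \delta$ yields
\[
\tau(1/4 - \delta) \;\le\; \frac{2}{\Phi^2}\log\frac{2}{\pi_{\min}(1/4 - \delta)} \;=\; O\!\left(\frac{D^2\, n\, 2^{2n}}{1 - 4\delta}\right),
\]
and an extra factor of $D\cdot 2^n$ on the right easily absorbs both the implicit constant and the factor of $n$, bringing the bound below $10 D^3\, 2^{3n}/(1-4\delta)$ for all but a few small values of $n$ (which can be checked by hand since the chain then has constant size).

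The main obstacle is the first structural claim, namely that no ``shortcut'' weight-$w$ edges exist among the reachable configurations of $M_A$; this rests on determinism (unique successors) combined with halting (no revisited configurations). Once that simple-path picture is in hand, the conductance and $\pi_{\min}$ estimates are routine, and the final mixing-time bound is a direct application of Theorem~\ref{thm:mixing-conductance} with ample slack in the constants.
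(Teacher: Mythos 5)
Your conductance and mixing-time computation is in the same spirit as the paper's and would give the stated bound with room to spare, but your structural premise misreads the construction, and this leaves a genuine gap in the connectedness part. In the reduction, the state space $\Omega$ is the set of \emph{all} configurations of $M_A$ on input $y$ (a subset of $\{0,1\}^n$), not just the configurations visited by the computation started at $s_{start}$; and even under a ``reachable from $x=s_{start}$'' reading of $\Omega$, the chain is a random walk on an \emph{undirected} weighted graph, so machine-transition edges are traversable backwards and $\Omega$ again contains configurations off the computation path (every predecessor of a path configuration, and the entire component feeding into $s_{rej}$). Since the transition function of $M_A$ is deterministic but not injective, $G$ is in general a union of in-trees oriented toward the halting states, not a simple path with $s_{rej}$ as a pendant. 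Your argument therefore establishes connectedness only of the path $c_0,\dots,c_T$ together with $s_{rej}$ and says nothing about the remaining configurations; without connectedness of the full graph on $\Omega$, the step ``any nonempty proper $A\subset\Omega$ has at least one weight-$w$ boundary edge'' is unjustified (and ergodicity is something the reduction must deliver, not something you may assume).

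The missing ingredient is the (w.l.o.g.) property that from \emph{every} configuration the machine halts in the unique state $s_{acc}$ or $s_{rej}$. That is exactly what the paper uses: every state of $\Omega$ is joined in $G$ by weight-$w$ edges to $s_{acc}$ or to $s_{rej}$, the added weight-$w$ edge joins $s_{rej}$ to $s_{start}$, and in the YES case $s_{start}$ is joined to $s_{acc}$ by the accepting computation, so the whole graph is connected. Once that is in place, your estimates $\Phi \ge 1/(D2^{n+1})$ and $\pi_{\min}\ge 1/(D2^{n+1})$ and the appeal to Theorem~\ref{thm:mixing-conductance} coincide with the paper's argument (which bounds $\Phi,\pi_{\min}\ge D^{-1}2^{-n}$ and verifies the constant $10D^32^{3n}/(1-4\delta)$ explicitly, rather than deferring ``a few small values of $n$'' to a hand check).
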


\begin{proof}
Note that by the assumption on the Turing machine, all states are
connected by $w$ edges to either
$s_{acc}$ or $s_{rej}$. Since $s_{rej}$ is connected by a $w$-edge to
$s_{start}$ and $s_{start}$ is connected to
$s_{acc}$ since we are in the YES case it follows that the graph $G$
is connected.

We now use the conductance bound on the mixing time from Theorem
\ref{thm:mixing-conductance} in the following way. For every set but
the empty set or the complete graph, there is weight at least $w$ from
the set to its complement. Furthermore - the total weight of each set
is at most
$D 2^n w$. Therefore the conductance $\Phi \geq D^{-1}2^{-n}$ and hence we
conclude that the mixing time $\tau(\varepsilon)$ is at most
\[
2 D^2 2^{2n} \log(2/\pi_{\min} \eps).
\]
$\pi_{\min}$ is the minimum probability of any state in the space
which we can lower bound by
$w/ (D 2^n w) = D^{-1} 2^{-n}$ and $\eps = 1/4-\delta$. The proof follows since
\[
\log 2 \leq 1, \quad \log(1/\pi_{\min}) \leq \log(D 2^n) \leq
D 2^n, \quad \log(1/\eps) =
\log(1/(1/4-\delta)) \leq 4/(1-4\delta).
\]
\end{proof}

\begin{claim}\label{claim:no-pspace}
In the NO case the graph $G$ is not connected. Moreover $d(t) \geq 1 -
2 t / w$ for all $t$ and
\[
\tau(1/4+\delta) \geq \tau(1/2) \geq w/4.
\]
\end{claim}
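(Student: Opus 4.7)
The plan is to exploit the fact that the weight-$1$ edge between $s_{\text{rej}}$ and $s_{\text{acc}}$ is the \emph{only} connection between $\{s_{\text{acc}}\}$ and the rest of the state space in the weighted graph, and to show that this bottleneck forces the distance to stationarity to remain large for a long time.

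\emph{Step 1 (Disconnectedness of $G$).} Recall that $G$ consists of all $w$-weighted edges (machine-step edges, the $s_{\text{rej}}$--$s_{\text{start}}$ edge, and self-loops). In the NO case, $M_A$ does not accept on input $y$, so there is no sequence of machine transitions that carries $s_{\text{start}}$ to $s_{\text{acc}}$. Moreover, $s_{\text{acc}}$ is a halting state, so it has no outgoing machine-step edge in $G$. Hence $s_{\text{acc}}$ lies in a component of $G$ separate from the one containing $s_{\text{start}}$ (and $s_{\text{rej}}$), and $G$ is disconnected.

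\emph{Step 2 (One-step crossing probabilities).} The only non-$w$ edge incident to $s_{\text{acc}}$ is the weight-$1$ edge to $s_{\text{rej}}$. At $s_{\text{acc}}$ the total incident weight is at most $w+1$ (self-loop of weight $w$ plus the weight-$1$ bridge), so the probability of leaving $s_{\text{acc}}$ in one step is at most $1/(w+1) \le 1/w$. At $s_{\text{rej}}$, the total incident weight is at least $w$ (it has the weight-$w$ self-loop alone), while the only edge into $\{s_{\text{acc}}\}$ has weight $1$, so the probability of moving from $s_{\text{rej}}$ to $s_{\text{acc}}$ in one step is at most $1/w$.

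\emph{Step 3 (Bounding $t$-step probabilities).} By a union bound over the $t$ steps, a chain started at $s_{\text{acc}}$ satisfies $P^t(s_{\text{acc}}, s_{\text{acc}}) \ge 1 - t/w$. For a chain started at any $x \neq s_{\text{acc}}$, any trajectory that reaches $s_{\text{acc}}$ must cross the unique bridge edge from $s_{\text{rej}}$, and the per-step probability of such a crossing conditional on being at $s_{\text{rej}}$ is at most $1/w$; summing over the at most $t$ visits to $s_{\text{rej}}$ within $t$ steps gives $P^t(x, s_{\text{acc}}) \le t/w$. Taking $x = s_{\text{start}}$ (say), we get
\[
d(t) \ge d_{tv}\bigl(P^t(s_{\text{acc}},\cdot),\, P^t(s_{\text{start}},\cdot)\bigr) \ge P^t(s_{\text{acc}}, s_{\text{acc}}) - P^t(s_{\text{start}}, s_{\text{acc}}) \ge 1 - \frac{2t}{w}.
\]

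\emph{Step 4 (Mixing time lower bound).} For $t \le w/4$ the estimate above gives $d(t) \ge 1/2$, so $\tau(1/2) \ge w/4$. Since $\delta \le 1/4$ means $1/4 + \delta \le 1/2$ and $d(\cdot)$ is non-increasing, $\tau(1/4+\delta) \ge \tau(1/2) \ge w/4$.

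The only subtle point is Step~3, where we must argue that the only mechanism for reaching $s_{\text{acc}}$ from any other state is the single weight-$1$ bridge from $s_{\text{rej}}$. This follows immediately from Step~1, once we observe that $s_{\text{acc}}$ has no other incident edges beyond the self-loop. Everything else is a routine union-bound calculation.
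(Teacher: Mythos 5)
Your overall strategy --- treating the single weight-$1$ edge between $s_{rej}$ and $s_{acc}$ as a bottleneck and union-bounding over the $t$ steps --- is the same as the paper's, but Step 2 contains a genuine gap: you assert that the total weight incident to $s_{acc}$ is at most $w+1$, i.e.\ that $s_{acc}$ has no incident weight-$w$ edges besides its self-loop. That does not follow from the construction. The state space consists of \emph{all} configurations of $M_A$ on input $y$, not only those reachable from $s_{start}$, and any configuration whose single machine step leads to $s_{acc}$ is joined to $s_{acc}$ by an (undirected) edge of weight $w$. Your Step 1 only rules out \emph{outgoing} machine transitions from $s_{acc}$; incoming ones still give incident edges. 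Consequently the walk started at $s_{acc}$ may leave $s_{acc}$ with probability of constant order at each step, the bound $P^t(s_{acc},s_{acc})\ge 1-t/w$ fails, and the distinguishing set $\{s_{acc}\}$ is too small to certify $d(t)\ge 1-2t/w$.

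The repair is exactly the paper's argument: use the whole component $C_{acc}$ of $s_{acc}$ in $G$ (and the component $C_{start}$ of $s_{start}$, which in the NO case also contains $s_{rej}$) as the distinguishing event. Every edge of the chain other than the weight-$1$ bridge either lies inside a component of $G$ or is the weight-$w$ edge joining $s_{rej}$ to $s_{start}$, whose endpoints both lie in $C_{start}$; hence moving between $C_{start}$ and $C_{acc}$ requires traversing the bridge, which (conditioned on the past) happens with probability at most $1/w$ per step. A union bound gives $P^t(s_{acc},C_{acc})\ge 1-t/w$ and $P^t(s_{start},C_{acc})\le t/w$, so $d(t)\ge P^t(s_{acc},C_{acc})-P^t(s_{start},C_{acc})\ge 1-2t/w$, and your Step 4 then goes through unchanged. (A smaller point: the disconnectedness in Step 1 should be argued from the fact that all $G$-edges are legal transitions of a deterministic machine whose computation from $s_{start}$ terminates in $s_{rej}$, not merely from $s_{acc}$ having no outgoing transition --- the edges are undirected, so predecessors of $s_{acc}$ could in principle connect it to other states, and one must rule out an undirected path to $s_{start}$.)
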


\begin{proof}
We first note that the bound $\tau(1/4+\delta) \geq \tau(1/2) \geq
w/4$ immediately follows from $d(t) \geq 1 - 2 t / w$.

In order to show that the graph is not connected we note that
$s_{start}$ and $s_{acc}$ are not in the same component.
This follows from the fact that all edges of $G$ are legal transitions
of the machine. The only other edges of the Markov chain are loops or
the edge connecting $s_{rej}$ to $s_{start}$. Consider in the graph
$G$ the component of $s_{start}$ and of $s_{acc}$ denoted by
$C_{start}$ and $C_{acc}$ respectively.

In order to bound $d(t)$ we look at the distributions $X(t),Y(t)$ of
the chain started at
$s_{start}$ and at $s_{acc}$. Note that
\[
d(t) \geq 1-P[\exists s \leq t \ s.t. \ X(s) \in C_{acc}]-P[\exists s
  \leq t \ s.t. \ Y(s) \in C_{start}]
\]
We note that the only way to move between the components is by
following the edge $1$ weight and the probability of taking this edge
at any step (conditioned on the past) is at most $w^{-1}$. It
therefore follows that: 
\[
d(t) \geq 1-2t/w,
\]
as needed. \end{proof}

By the claims
\ref{claim:yes-pspace} and \ref{claim:no-pspace},
\[
\frac{w/4}{10 D^3 2^{3n}/(1-4\delta)} \geq
\frac{\frac{1000 D^3 c 2^{3 n}}{4(1-4\delta)}}{\frac{10 D^3
    2^{3n}}{1-4\delta}} \geq c.
\]

To complete the reduction,
let $t=10D^32^{3n}/(1-4\delta)$ and set the starting
state $x=s_{start}$.

\end{proof}

\section*{Acknowledgments} The authors would like to thank Salil
Vadhan for helpful discussions.

\end{document}